\newtheorem{theo}{Theorem}[section]
\newtheorem{prop}[theo]{Proposition}
\newtheorem{cor}[theo]{Corollary}
\newtheorem{lem}[theo]{Lemma}
\theoremstyle{definition}
\newtheorem{defi}[theo]{Definition}
\newtheorem{defiprop}[theo]{Definition/Proposition}
\newtheorem{exa}[theo]{Example}
\newtheorem{rem}[theo]{Remark}
\newtheorem{question}[theo]{Question}
\newtheorem{conj}[theo]{Conjecture}
\numberwithin{equation}{section}
\newcommand{\F}{{\mathbb F}}
\newcommand{\Z}{{\mathbb Z}}
\newcommand{\C}{{\mathbb C}}
\newcommand{\cE}{{\mathcal E}}
\newcommand{\cG}{{\mathcal G}}
\newcommand{\cP}{{\mathcal P}}
\newcommand{\cQ}{{\mathcal Q}}
\newcommand{\cB}{{\mathcal B}}
\newcommand{\cS}{{\mathcal S}}
\newcommand{\cU}{{\mathcal U}}
\newcommand{\mm}{{\mathfrak m}}
\newcommand{\veps}{{\varepsilon}}
\newcommand{\Char}{\mbox{${\rm char}$}}
\newcommand{\ann}{\mbox{\rm ann}}
\newcommand{\soc}{\mbox{\rm soc}}
\newcommand{\rad}{\mbox{\rm rad}}
\newcommand{\im}{\mbox{\rm im}\,}
\newcommand{\Rhat}{\mbox{$\widehat{R}$}}
\newcommand{\lcm}{\mbox{\rm lcm}}
\newcommand{\Tr}{\mbox{\rm Tr}}
\newcommand{\wt}{{\rm wt}}
\newcommand{\wts}{\mbox{${\rm wt}_{\rm s}$}}
\newcommand{\wtH}{\mbox{${\rm wt}_{\rm H}$}}
\newcommand{\ds}{\mbox{${\rm d}_{\rm s}$}}
\newcommand{\T}{\mbox{$\!^{\sf T}$}}
\newcommand{\group}[1]{\mbox{$\langle{#1}\rangle$}}
\newcommand{\inner}[2]{\mbox{$\langle{#1}\,|\,{#2}\rangle$}}
\newcommand{\inners}[2]{\mbox{$\langle{\,{#1}\,}|\,{#2}\,\rangle_{\rm s}$}}
\newcommand{\ov}[1]{\mbox{$\overline{#1}$}}
\newcommand{\GL}{\mathrm{GL}}
\newcommand{\SL}{\mathrm{SL}}
\newcommand{\Symp}{\mathrm{Symp}}
\newcommand{\MonSL}{\mbox{$\mathrm{Mon}_{\mathrm{SL}}$}}
\newcommand{\diag}{\textup{diag}}
\newcommand{\dist}{\textup{dist}}
\newcommand{\cBn}{\mbox{$\cB^{\otimes n}$}}
\newcommand{\Smallfourmat}[4]{\mbox{$\left(\begin{smallmatrix}{#1}&{#2}\\{#3}&{#4}\end{smallmatrix}\right)$}}
\newcounter{alp}
\newcounter{ara}
\newcounter{rom}
\newenvironment{alphalist}{\begin{list}{(\alph{alp})\hfill}{\usecounter{alp}
     \topsep-.2ex \labelwidth.6cm \leftmargin.6cm \labelsep0cm
     \rightmargin0cm \parsep0ex \itemsep0ex}}{\end{list}}
\newenvironment{arabiclist}{\begin{list}{(\arabic{ara})\hfill}{\usecounter{ara}
     \topsep0.4ex \labelwidth.6cm \leftmargin.6cm \labelsep0cm
     \rightmargin0cm \parsep0ex \itemsep0ex}}{\end{list}}
\begin{document}
\title{On Quantum Stabilizer Codes derived from\\ Local Frobenius Rings}
\date{\today}
\author{Heide Gluesing-Luerssen\thanks{HGL was partially supported by the grant \#422479 from the Simons Foundation.
  HGL and TP are with the Department of Mathematics, University of Kentucky, Lexington KY 40506-0027, USA;
\{heide.gl, tefjol.pllaha\}@uky.edu.}\ \ and Tefjol Pllaha$^1$}

\maketitle

{\bf Abstract:}
 In this paper we consider stabilizer codes over local Frobenius rings.
 First, we study the relative minimum distances of a stabilizer code and its reduction onto the residue field.
 We show that for various scenarios, a free stabilizer code over the ring does not underperform the according stabilizer code over the  field.
This leads us to conjecture that the same is true for all free stabilizer codes.
 Secondly, we focus on the isometries of stabilizer codes.
 We present some preliminary results and introduce some interesting open problems.

{\bf Keywords:} Quantum stabilizer codes, self-orthogonal codes, local Frobenius rings, symplectic isometries.

\section{Introduction}
Quantum error-correcting codes have gained a lot of attention during the last two decades.
This is particularly true for quantum stabilizer codes, which are sometimes regarded as the quantum equivalent to linear codes in classical coding theory.
While the first examples of quantum error-correcting codes were in fact stabilizer codes,
the general construction was introduced by Gottesman~\cite{Go96} and Calderbank et al.~\cite{CRSS97}.

Quantum stabilizer codes are particularly appealing because of their close relation to classical codes which allows to analyze them with classical methods.
More precisely, as observed by Calderbank et al.~\cite{CRSS98}, the quantum  error-correcting performance of a quantum stabilizer code is determined by the properties
of a self-orthogonal classical binary code (called stabilizer code), where self-orthogonality refers to a certain symplectic inner product.
This observation has led to the construction of many good codes.
Subsequently, the entire setting has been generalized by Ashikhmin/Knill~\cite{AsKn01} to quantum stabilizer codes derived from non-binary fields.
Ketkar et al.\ in~\cite{KKKS06} derive, among other things, various classes of
codes such as quantum Hamming codes, quantum quadratic residue codes etc.

The idea of generalizing the setting of quantum stabilizer codes can be driven even further by starting with classical codes over finite rings.
This has been initiated by Nadella/Klappenecker in~\cite{NaKl12}, where they mainly focus on finite chain rings.
Their main result shows that a quantum stabilizer code derived from a free code over a finite chain ring does not have a larger distance than the
associated code over the residue field (as always for stabilizer codes, the distance is a relative distance based on the symplectic weight; see Definition~\ref{D-Dist}).

In this paper we consider the same question for codes over finite local commutative Frobenius rings.
More precisely, we investigate whether such a code can have a larger distance than the code obtained by reduction to the vector space over the residue field.
As our work will show, local Frobenius rings appear to be the natural setting for this question because -- just like chain rings -- they have a principal socle, which
then is the annihilator of the maximal ideal and that is all that is needed in order to derive a close relationship between the code and its reduction.
However, unfortunately, we can provide only partial answers to our question, but we conjecture that stabilizer codes over local Frobenius rings do not perform
worse than their reduction over the residue field.
In addition, for free codes we can settle various cases where the distance equals the one of its reduction.

In order to carry out these investigations we first have to develop the proper setup. This will be done in the next section, where we introduce the quantum space and error
basis associated to a given local Frobenius ring.
The definition of the associated Pauli group needs particular care so that stabilizer codes are exactly the self-orthogonal codes with respect to the symplectic inner
product stemming from the non-commutativity of the error basis.
More precisely, this requires the phases of the Pauli operators be drawn from a specific group of primitive roots of unity.
All of this is carried out in Section~\ref{SS-StabCode}.
In Section~\ref{SS-RedDim} we study stabilizer codes in~$R^{2n}$, where~$R$ is a local Frobenius ring, and their reductions modulo the maximal ideal~$\mm$ of~$R$.
The latter are stabilizer codes over the field $R/\mm$.
We show that the dimension of free codes is invariant under reduction and derive a standard form for free codes and their duals.
The latter makes use of symplectic isometries, i.e., linear maps that preserve the symplectic inner product and the symplectic weight -- and thus self-orthogonality.
In Sections~\ref{SS-Dist} and~\ref{SS-Exa} we compare the relative distance of a code with that of its reduction and provide various examples.
Finally, in the last section we return to symplectic isometries.
We describe explicitly the symplectic isometries on the entire space $R^{2n}$.
This allows us to show that symplectic isometries between stabilizer codes do not necessarily extend to such isometries on the entire space (even in the binary case).
This result indicates that a classification of stabilizer codes into symplectic isometry classes is by no means obvious and therefore leads to open problems left for future research.
As to our knowledge these questions have not been answered for stabilizer codes over finite fields either.

\section{Ring-Theoretic Preliminaries}
In this short section we collect the ring-theoretic background needed later.
Let~$R$ be a finite commutative ring with identity with group of units~$R^*$.
We denote by~$\soc(R)$ the socle of the $R$-module~$R$ and by
$\rad(R)$ the Jacobson radical of~$R$.
Moreover, denote by~$\Rhat:=\text{Hom}(R,\C^*)$ the group of characters of~$R$
(i.e., group homomorphisms from $(R,\,+\,)$ to $\C^*$).
We denote the trivial character, that is the map $r\mapsto 1$ for all $r\in R$, by~$\veps_R$.
The most fundamental property of characters on~$R$ (or in fact any finite abelian group) is the orthogonality relation
\begin{equation}\label{e-RChar}
   \sum_{r\in R}\chi(r)=\left\{\begin{array}{cl}
          0,&\text{if }\chi\neq\veps_R,\\ |R|,&\text{if }\chi=\veps_R.\end{array}\right.
\end{equation}
The character group~$\Rhat$ carries an $R$-module structure via scalar multiplication
\begin{equation}\label{e-bimodule}
    (r\!\cdot\!\chi)(v)=\chi(vr) \text{ for all } r\in R\text{ and }v\in R.
\end{equation}

Later we will restrict ourselves to finite Frobenius rings.
We now summarize the crucial properties of finite commutative Frobenius rings.
Details can be found in Lam~\cite[Th.~(16.14)]{Lam99}, Lamprecht~\cite{Lamp53}, Hirano~\cite[Th.~1]{Hi97}, and
Wood~\cite[Th.~3.10, Prop.~5.1]{Wo99}.
The results are also true for non-commutative finite rings with according left and right versions (see Honold~\cite[Th.~1 and Th.~2]{Hon01}).
However, we will only need commutative rings in this paper.

\begin{theo}\label{T-Frob}
Let~$R$ be a finite commutative ring. The following are equivalent.
\begin{alphalist}
\item $\soc(R)\cong R/\rad(R)$ as $R$-modules.
\item $\soc(R)$ is a principal ideal, i.e., $\soc(R)=\alpha R$ for some $\alpha\in R$.
\item $\Rhat\cong R$ as $R$-modules.
\end{alphalist}
The ring~$R$ is called Frobenius if any (hence all) of the above hold true.
In this case there exists a character~$\chi$ such that $\Rhat= R\!\cdot\!\chi$.
Any such character is called a generating character of~$R$.
Any two generating characters~$\chi,\,\chi'$ differ by a unit, i.e., $\chi'=u\!\cdot\!\chi$ for some $u\in R^*$.
\end{theo}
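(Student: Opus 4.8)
The plan is to prove the three equivalences and the two concluding assertions about generating characters in a cycle, using the bimodule structure on $\Rhat$ from~\eqref{e-bimodule} as the central tool.

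First I would establish (c)$\Rightarrow$(b). Assume $\Rhat\cong R$ as $R$-modules, and let $\chi\in\Rhat$ correspond to $1\in R$ under this isomorphism; then the map $r\mapsto r\!\cdot\!\chi$ is an $R$-module isomorphism $R\to\Rhat$, so in particular $\Rhat=R\!\cdot\!\chi$. Now I would pass to annihilators. For an $R$-submodule $M\subseteq R$ set $M^\perp=\{\psi\in\Rhat: \psi(M)=1\}$; since $R$ is finite, $|M|\cdot|M^\perp|=|R|$ and $(M^\perp)^\perp$ corresponds to $M$ under the identification $R\cong\widehat{\Rhat}$. Applying this to $M=\rad(R)$ and using the $R$-module isomorphism $R\cong\Rhat$, $1\mapsto\chi$, the submodule $\rad(R)^\perp$ of $\Rhat$ corresponds to the ideal $\ann(\rad(R))=\{r\in R: r\rad(R)=0\}=\soc(R)$. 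On the other hand $\rad(R)^\perp\cong\widehat{R/\rad(R)}$, which is a one-dimensional vector space over the field $R/\rad(R)$, hence cyclic as an $R$-module. Therefore $\soc(R)$ is a principal ideal, giving~(b).

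Next, (b)$\Rightarrow$(a). Write $\soc(R)=\alpha R$. The annihilator of $\alpha$ in $R$ contains $\rad(R)$: indeed $\alpha\rad(R)$ is contained in $\soc(R)$ and, being a proper submodule of the simple-socle situation, one checks $\alpha\rad(R)=0$ because $\soc(R)=\alpha R$ is a semisimple module on which $\rad(R)$ acts trivially. Hence the surjection $R\to\alpha R=\soc(R)$, $r\mapsto\alpha r$, factors through $R/\rad(R)$; since $R/\rad(R)$ is a simple module and $\soc(R)\neq 0$, this factored map is an isomorphism $R/\rad(R)\xrightarrow{\ \sim\ }\soc(R)$, which is~(a). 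For (a)$\Rightarrow$(c) I would argue that $\Rhat$ is always a faithful $R$-module (if $r\!\cdot\!\chi=\veps_R$ for all $\chi$ then $\chi(vr)=1$ for all $v,\chi$, forcing $r=0$ by~\eqref{e-RChar} applied after separating points), and that $\soc(\Rhat)\cong\soc(R)$ as $R$-modules via the annihilator duality sketched above; combining this with~(a) gives $\soc(\Rhat)\cong R/\rad(R)$, and a finite module over $R$ whose socle is simple and which is faithful is forced to be cyclic of the form $R/\ann$, with faithfulness giving $\ann=0$, i.e.\ $\Rhat\cong R$. The main obstacle is this last implication: making precise why a faithful module with simple socle over a finite commutative ring must be free of rank one; I expect to handle it by decomposing $R$ into a product of local rings and reducing to the local case, where a faithful module with simple socle and the right cardinality is cyclic by Nakayama together with a length count.

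Finally, for the concluding statements: once $\Rhat=R\!\cdot\!\chi$ for some $\chi$ (which is exactly the content of (c) made explicit, as in the first paragraph), $\chi$ is by definition a generating character. If $\chi,\chi'$ are both generating, then $\chi'=u\!\cdot\!\chi$ for some $u\in R$ because $\Rhat=R\!\cdot\!\chi$, and symmetrically $\chi=v\!\cdot\!\chi'$ for some $v\in R$; hence $\chi=(vu)\!\cdot\!\chi$, so $(1-vu)\!\cdot\!\chi=\veps_R$, and faithfulness of $\Rhat$ forces $vu=1$, i.e.\ $u\in R^*$. This completes the cycle (a)$\Rightarrow$(c)$\Rightarrow$(b)$\Rightarrow$(a) together with the two supplementary claims.
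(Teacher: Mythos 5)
The paper itself does not prove Theorem~\ref{T-Frob}; it cites Lam, Lamprecht, Hirano, and Wood for this standard material, so your proof must be judged on its own merits. The overall cycle (c)$\Rightarrow$(b)$\Rightarrow$(a)$\Rightarrow$(c) via Pontryagin duality and the annihilator correspondence $I\leftrightarrow I^\perp$ is a sensible route, and the concluding paragraph on generating characters is correct as written.

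There is, however, a real gap at the step you yourself flag, namely the lemma underlying (a)$\Rightarrow$(c). The statement \emph{a faithful $R$-module $M$ with simple socle and $|M|=|R|$ is free of rank one} is false, and no Nakayama argument or length count will repair it. Take $R=\F_p[x,y]/(x^2,xy,y^2)$, a finite local ring with $\mm=(x,y)$ and $\soc(R)=\mm$ two-dimensional over $R/\mm$. Then $M=\Rhat$ is faithful, $|\Rhat|=|R|=p^3$, and $\soc(\Rhat)=\mm^\perp\cong\widehat{R/\mm}$ is simple; yet $\Rhat\not\cong R$, since $|\soc(\Rhat)|=p$ while $|\soc(R)|=p^2$. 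So simplicity of $\soc(\Rhat)$ plus faithfulness plus cardinality does not force $\Rhat\cong R$. The simplicity you must exploit is that of $\soc(R)$ --- which is precisely hypothesis~(a) --- not that of $\soc(\Rhat)$. In the local case, (a) makes $\soc(R)$ the unique minimal nonzero ideal; faithfulness of $\Rhat$ then yields some $\chi$ with $\soc(R)\not\subseteq\ann(\chi)$, and since $\ann(\chi)$ is a proper ideal missing the unique minimal nonzero ideal, $\ann(\chi)=0$. Hence $r\mapsto r\!\cdot\!\chi$ is injective and, by $|\Rhat|=|R|$, an isomorphism.

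Two smaller slips are of the same flavour: in (c)$\Rightarrow$(b) you call $R/\rad(R)$ a field, and in (b)$\Rightarrow$(a) you call it a simple $R$-module; both are correct only when $R$ is local, since in general $R/\rad(R)$ is a finite product of fields. The conclusions you want (cyclicity of $\widehat{R/\rad(R)}$, respectively injectivity of the factored map $R/\rad(R)\to\alpha R$) still hold, but each then needs a component-by-component argument that also uses the nonvanishing of the socle of each local factor. The cleanest fix is to invoke the decomposition $R=R_1\times\cdots\times R_t$ into finite local rings from the outset --- each of (a), (b), (c) is easily seen to hold for $R$ iff it holds for every $R_i$ --- so that throughout the argument $R/\rad(R)$ is genuinely a field and $\soc(R)$, once simple, is genuinely the unique minimal ideal. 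You already intend this reduction for (a)$\Rightarrow$(c); it should be applied globally.
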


The class of Frobenius rings is quite large.
For example, the integer residue rings~$\Z_N:=\Z/N\Z$, finite fields, finite chain rings are Frobenius and so are
finite group rings as well as matrix ring over Frobenius rings.
The class of Frobenius rings is closed under taking direct products.
For details see Wood~\cite[Ex.~4.4]{Wo99} and Lam~\cite[Sec.~16.B]{Lam99}.

The following properties of Frobenius rings are well-known.

\begin{rem}\label{R-FrobProp}
Let~$R$ be a finite commutative Frobenius ring. Then the following are true.
\begin{alphalist}
\item Let~$\chi$ be a character of~$R$.
      Then~$\chi$ is a generating character of~$R$ if and only if the only ideal contained in
      $\ker\chi:=\{a\in R\mid \chi(a)=1\}$ is the zero ideal; see~\cite[Cor.~3.6]{ClGo92}.
\item $R$ satisfies the double annihilator property, i.e.,
      $\ann(\ann(I))=I$ for each ideal~$I$ of~$R$,
      where $\ann(I)$ denotes the annihilator ideal of~$I$;
      see \cite[Th.~15.1]{Lam99}.
      In particular,
      \begin{equation}\label{e-ann}
         \ann(\rad(R))=\soc(R)\ \text{ and }\ \ann(\soc(R))=\rad(R),
      \end{equation}
      see \cite[Cor.~15.7]{Lam99}.
\item $R=R_1\times\ldots\times R_t$ for suitable local Frobenius rings~$R_i$; see~\cite[Th.~15.27]{Lam99}.
\item If~$R$ is local, that is, $\rad(R)$ is the unique maximal ideal of~$R$, then~$\soc(R)$ is the unique minimal
      ideal~\cite[Ex.~(3.14)]{Lam99}.
      In this case $R/\rad(R)$ is called the \emph{residue field of} $R$.
\end{alphalist}
\end{rem}

\section{Stabilizer Codes over Frobenius Rings}\label{SS-StabCode}
In this section we start with a finite commutative Frobenius ring and define the  appropriate Pauli group.
This needs some preparation in order to have the correct phases included.
Having the right notion of Pauli group will allow us to define stabilizers, stabilizer codes, and quantum stabilizer codes, all originating from this particular ring setting, in
such a way that stabilizer codes are exactly the self-orthogonal codes with respect to a certain symplectic inner product.
The outline of this section follows~\cite{NaKl12}, but differs from this reference in the details and some of the proofs (especially for Theorem~\ref{P-StabCodeSO}).

Throughout this section let~$R$ be a finite commutative Frobenius ring with cardinality $|R|=q$ and generating character~$\chi$.
We start with defining the ambient space for quantum-error-correcting codes, namely~$\C^{q^n}$, its decomposition into $n$ copies of qubits, and the associated
error operators.

Let
\[
  \cB=\{v_x\mid x\in R\}
\]
be an ON-basis of~$\C^q$.
Then an ON-basis of $\C^{q^n}=(\C^q)^{\otimes n}$ is given by
\begin{equation}\label{e-Bn}
  \cBn=\{v_x=v_{x_1}\otimes\ldots\otimes v_{x_n}\mid x=(x_1,\ldots,x_n)\in R^n\}.
\end{equation}

\begin{defi}\label{D-XZa}
For $a\in R$ define the two linear maps $X(a),\,Z(a):\C^{q}\longrightarrow\C^{q}$ via their action on the
basis~$\cB$ given by
\[
  X(a)(v_x)=v_{x+a},\quad Z(a)(v_x)=\chi(ax)v_x \text{ for all }x\in R.
\]
\end{defi}

Note that $X(a)$ is a permutation for all $a\in R$ and thus a unitary map.
Furthermore, the matrix representation of~$Z(a)$ w.r.t.~the basis~$\cB$ is given by $\text{diag}(\chi(ax))_{x\in R}$.
Since~$\chi$ takes complex values with absolute value~$1$, these maps are unitary as well.

We extend these maps to unitary maps on $\C^{q^n}$ in the usual way.
For $a=(a_1,\ldots,a_n)\in R^n$ set
\[
  X(a)=X(a_1)\otimes\ldots\otimes X(a_n),\quad Z(a)=Z(a_1)\otimes\ldots\otimes Z(a_n).
\]
Explicitly
\begin{equation}\label{e-XZ}
  X(a)(v_x)=v_{x+a}\ \text{ and }
  Z(a)(v_x)=\chi(a x)v_x \ \text{ for all }a,\,x\in R^n,
\end{equation}
where $ax=a\cdot x=\sum_{i=1}^n a_i x_i$ denotes the standard dot product in~$R^n$.
The second identity follows form
$Z(a)(v_x)=\chi(a_1x_1)v_{x_1}\otimes\ldots\otimes\chi(a_nx_n)v_{x_n}
=\big(\prod_{i=1}^n\chi(a_ix_i)\big)v_{x_1}\otimes\ldots\otimes v_{x_n}=\chi(a x)v_x$, where the last step is simply the
homomorphism property of~$\chi$.
Note that
\[
   X(a)^\ell=X(\ell a)\text{ and }Z(a)^\ell=Z(\ell a)\text{ for all }a\in R^n\text{ and }\ell\in\Z.
\]
Since the maps are unitary, we also have $X(a)^{-1}=X(a)^\dagger$ and $Z(a)^{-1}=Z(a)^\dagger$,
where~$M^\dagger$ denotes the conjugate transpose of the complex matrix $M$.
It is easy to see that
\begin{equation}\label{e-ab0}
  X(a)Z(b)=X(a')Z(b')\Longleftrightarrow (a,b)=(a',b')
\end{equation}
for all $(a,b),(a',b')\in R^{2n}$.

We will consider the group of unitary operators generated by all $X(a),\,Z(b)$ and certain scalar multiplications.
To this end we start with the set, called \emph{error basis},
\begin{equation}\label{e-En}
   \cE_n:=\{X(a)Z(b)\mid a,b\in R^n\}=\{X(a_1)Z(b_1)\otimes\ldots\otimes X(a_n)Z(b_n)\mid (a,b)\in\ R^{2n}\},
\end{equation}
which is a subset of~$\cU(q^n)$,  the unitary group on~$\C^{q^n}$.
Thanks to~\eqref{e-ab0} we have $|\cE_n|=q^{2n}$.
The elements of~$\cE_n$ adhere to the following multiplication rule.

\begin{lem}[\mbox{\cite[Prop.~4 and proof]{NaKl12}}]\label{L-MultEn}
Let $(a,b),(a',b')\in R^{2n}$ and consider the unitary maps $P=X(a)Z(b),\,P'=X(a')Z(b')$. Then
\[
   PP'=\chi(b a')X(a+a')Z(b+b')\ \text{ and }\ P'P=\chi(b'a)X(a+a')Z(b+b').
\]
As a consequence,
\[
    PP'=P'P\Longleftrightarrow \chi(b a'-b' a)=1.
\]
\end{lem}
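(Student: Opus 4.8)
The plan is to verify both identities by evaluating the operators on the ON-basis $\cBn$ and then to extract the commutativity criterion by cancellation. The essential content is a single commutation relation between a $Z$-block and an $X$-block, which I would isolate first, since the rest is bookkeeping.

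\textbf{Step 1: the basic commutation relation.} Using \eqref{e-XZ}, for any $x\in R^n$ I compute $Z(b)X(a')(v_x)=Z(b)(v_{x+a'})=\chi(b(x+a'))v_{x+a'}$, and since $\chi$ is a homomorphism on $(R,+)$ this equals $\chi(ba')\chi(bx)v_{x+a'}$; on the other hand $X(a')Z(b)(v_x)=X(a')(\chi(bx)v_x)=\chi(bx)v_{x+a'}$. Hence $Z(b)X(a')=\chi(ba')X(a')Z(b)$.

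\textbf{Step 2: the multiplication rule.} From Step 1, $PP'=X(a)Z(b)X(a')Z(b')=\chi(ba')X(a)X(a')Z(b)Z(b')$. The additivity $X(a)X(a')=X(a+a')$ is immediate from \eqref{e-XZ}, and $Z(b)Z(b')=Z(b+b')$ follows likewise using the homomorphism property of $\chi$; therefore $PP'=\chi(ba')X(a+a')Z(b+b')$. (Equivalently, one evaluates $PP'(v_x)$ directly: both sides yield $\chi(ba')\chi((b+b')x)\,v_{x+a+a'}$.) Swapping the roles of $(a,b)$ and $(a',b')$ gives $P'P=\chi(b'a)X(a+a')Z(b+b')$.

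\textbf{Step 3: the commutativity criterion.} The operator $X(a+a')Z(b+b')$ is unitary, hence invertible, so by Step 2 we have $PP'=P'P$ if and only if $\chi(ba')=\chi(b'a)$. Since $\chi$ is a homomorphism with $\chi(-c)=\chi(c)^{-1}$, this is the same as $\chi(ba')\chi(b'a)^{-1}=\chi(ba'-b'a)=1$, as claimed. I do not expect any real obstacle here: the argument rests entirely on the homomorphism property of the generating character $\chi$ together with the additivity of $X(\cdot)$ and $Z(\cdot)$ on $\cBn$. The only subtlety worth a remark is that the common factor $X(a+a')Z(b+b')$ may be cancelled because it is invertible, which is what converts the operator identities of Step 2 into the scalar condition of Step 3.
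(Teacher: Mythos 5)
Your proof is correct and is the standard argument: the paper itself does not reprove this lemma (it cites \cite[Prop.~4 and proof]{NaKl12}), and the cited proof proceeds by exactly the same route—establishing the single commutation relation $Z(b)X(a')=\chi(ba')X(a')Z(b)$ on basis vectors and then assembling the product formula, with the commutativity criterion following by invertibility of $X(a+a')Z(b+b')$. Nothing to add.
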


In order to define the correct Pauli group associated with~$\cE_n$, we need the following result which can easily be verified using induction.
\begin{lem}\label{L-orderXaZb}
Let $(a,b)\in R^{2n}$. Then $\big(X(a)Z(b)\big)^{-1}=\chi(ba)X(-a)Z(-b)$ and more generally
\[
     \big(X(a)Z(b)\big)^\ell=\chi\Big(\frac{\ell(\ell-1)}{2}ba\Big)X(\ell a)Z(\ell b)\text{ for all }\ell\in\Z.
\]
\end{lem}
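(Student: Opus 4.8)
The plan is to prove both formulas simultaneously by induction on $\ell \ge 0$, and then extend to negative $\ell$ using the unitarity relation. The base cases $\ell = 0$ (giving the identity) and $\ell = 1$ (giving $X(a)Z(b)$ itself) are immediate, since $\binom{0}{2} = \binom{1}{2} = 0$ and $\chi(0) = 1$. The inductive step is the heart of the matter: assuming $\big(X(a)Z(b)\big)^\ell = \chi\big(\tfrac{\ell(\ell-1)}{2} ba\big) X(\ell a) Z(\ell b)$, I would compute $\big(X(a)Z(b)\big)^{\ell+1} = \big(X(a)Z(b)\big)^\ell \cdot \big(X(a)Z(b)\big)$ and apply Lemma~\ref{L-MultEn} with the first factor being $X(\ell a)Z(\ell b)$ (up to the scalar $\chi(\tfrac{\ell(\ell-1)}{2}ba)$, which pulls out freely) and the second being $X(a)Z(b)$.

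By Lemma~\ref{L-MultEn}, the product $\big(X(\ell a)Z(\ell b)\big)\big(X(a)Z(b)\big)$ equals $\chi\big((\ell b)\cdot a\big) X(\ell a + a) Z(\ell b + b) = \chi(\ell\, ba) X\big((\ell+1)a\big) Z\big((\ell+1)b\big)$. Combining the accumulated scalars via the homomorphism property of $\chi$ gives total phase $\chi\big(\tfrac{\ell(\ell-1)}{2} ba + \ell\, ba\big) = \chi\big(\tfrac{\ell(\ell+1)}{2} ba\big)$, which is exactly the claimed phase for exponent $\ell+1$ since $\tfrac{\ell(\ell+1)}{2} = \binom{\ell+1}{2}$. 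This closes the induction for $\ell \ge 0$. Note that $\tfrac{\ell(\ell-1)}{2}$ is always an integer, so $\chi$ of it makes sense even though $\chi$ is only a character on $(R,+)$; the coefficient should be read as an integer acting on the ring element $ba \in R$ by repeated addition.

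For the inverse formula (the special case $\ell = -1$) and more generally negative exponents, I would argue as follows. Since $X(a)Z(b)$ is unitary, $\big(X(a)Z(b)\big)^{-1} = \big(X(a)Z(b)\big)^\dagger$; alternatively, one can simply verify directly that $\chi(ba) X(-a)Z(-b)$ is a two-sided inverse of $X(a)Z(b)$ by one application of Lemma~\ref{L-MultEn}: $\big(X(a)Z(b)\big)\big(X(-a)Z(-b)\big) = \chi\big(b\cdot(-a)\big) X(0)Z(0) = \chi(-ba)\,\mathrm{id}$, so multiplying by $\chi(ba)$ yields the identity, and symmetrically on the other side. Then for arbitrary negative $\ell$ one writes $\big(X(a)Z(b)\big)^\ell = \Big(\big(X(a)Z(b)\big)^{-1}\Big)^{-\ell} = \big(\chi(ba) X(-a)Z(-b)\big)^{-\ell}$ and applies the already-established positive-exponent formula to the element $X(-a)Z(-b)$ (whose "$ba$" parameter is $(-b)\cdot(-a) = ba$), tracking the extra scalar $\chi(ba)^{-\ell}$; a short bookkeeping check confirms that the total phase is $\chi\big(\tfrac{\ell(\ell-1)}{2} ba\big)$ for negative $\ell$ as well, using that $-\ell + \binom{-\ell}{2} = \binom{\ell}{2} = \tfrac{\ell(\ell-1)}{2}$ as an integer identity.

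**The main obstacle** is purely a matter of careful bookkeeping rather than any conceptual difficulty: one must be scrupulous about the order of multiplication in Lemma~\ref{L-MultEn} (the two products $PP'$ and $P'P$ carry different phases $\chi(ba')$ versus $\chi(b'a)$), about the fact that $\tfrac{\ell(\ell-1)}{2}$ is an integer scalar multiplying the ring element $ba$, and about consistently extending from $\ell \ge 0$ to $\ell \in \Z$. None of these steps requires the Frobenius hypothesis beyond what is already baked into the existence of the generating character $\chi$ and the multiplication rule of Lemma~\ref{L-MultEn}.
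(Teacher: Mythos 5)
Your proof is correct and takes the same approach the paper indicates: the paper supplies no written proof for this lemma, remarking only that it ``can easily be verified using induction,'' and your write-up carries out exactly that induction, using Lemma~\ref{L-MultEn} for the inductive step and then handling the inverse and general negative exponents by direct verification plus a bookkeeping reduction to the positive case. The one subtlety you flag (that $\tfrac{\ell(\ell-1)}{2}$ is an integer acting on the ring element $ba$, not a ring element itself) is indeed the point worth making explicit, and you make it.
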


Now we are ready to define a parameter that will be crucial for the definition of the Pauli group.

\begin{theo}\label{T-exponent}
Let $\Char(R)=c$ and define $N:=\lcm\big\{|X(a)Z(b)|\,\big|\, (a,b)\in R^{2n}\big\}$,
where $|\cdot|$ denotes the order of the given matrix in the unitary group $\cU(q^n)$. Then
\[
   N=\left\{\begin{array}{cl}c,&\text{if $c$ is odd,}\\[.5ex] 2c,&\text{if~$c$ is even.}\end{array}\right.
\]
\end{theo}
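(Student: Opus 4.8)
The plan is to read off the order of each $X(a)Z(b)$ from the power formula in Lemma~\ref{L-orderXaZb},
\[
  \big(X(a)Z(b)\big)^\ell=\chi\Big(\tfrac{\ell(\ell-1)}{2}ba\Big)X(\ell a)Z(\ell b),
\]
and then combine these orders. By~\eqref{e-ab0} the operator $X(\ell a)Z(\ell b)$ is the identity precisely when $\ell a=0$ and $\ell b=0$ in~$R^n$; hence $(X(a)Z(b))^\ell=I$ if and only if $\ell a=0$, $\ell b=0$ and $\chi\big(\tfrac{\ell(\ell-1)}{2}ba\big)=1$. Note that $\tfrac{\ell(\ell-1)}{2}$ is always an integer, so all these expressions make sense.

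First I would get $c\mid N$ by exhibiting one operator of order~$c$: for $a=(1,0,\dots,0)\in R^n$ and $b=0$ we have $X(a)^\ell=X(\ell a)$, which is the identity iff $\ell\cdot 1=0$ in $R$, i.e.\ iff $c\mid\ell$, so $|X(a)|=c$. For the reverse divisibility I would plug the candidate exponent into the power formula. If $c$ is odd, take $\ell=c$: then $ca=cb=0$ since $c=\Char(R)$ annihilates~$R$, and $\tfrac{c(c-1)}{2}=c\cdot\tfrac{c-1}{2}$ is a multiple of~$c$, so the phase is $\chi(0)=1$ and $(X(a)Z(b))^c=I$ for every $(a,b)$; thus $N\mid c$ and $N=c$. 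If $c$ is even, take $\ell=2c$: again $2ca=2cb=0$, and $\tfrac{2c(2c-1)}{2}=c(2c-1)$ is a multiple of~$c$, so $(X(a)Z(b))^{2c}=I$ for every $(a,b)$ and $N\mid 2c$.

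The remaining point for even~$c$ — and the only step that uses more than bookkeeping — is to rule out $N\mid c$, which then forces $N=2c$ because $c\mid N\mid 2c$. Evaluating the power formula at $\ell=c$ gives $ca=cb=0$ as before, so $(X(a)Z(b))^c=\chi\big(\tfrac{c(c-1)}{2}ba\big)I$; using $(c-1)r=-r$ for all $r\in R$ (valid since $cr=0$) the phase equals $\chi\big(-\tfrac{c}{2}ba\big)$. So it suffices to find $(a,b)$ with $\chi\big(\tfrac{c}{2}ba\big)\neq 1$. Since $0<\tfrac{c}{2}<c=\Char(R)$, the element $\tfrac{c}{2}\cdot 1$ is nonzero in~$R$, hence $\tfrac{c}{2}R$ is a nonzero ideal of~$R$; as~$\chi$ is a generating character, Remark~\ref{R-FrobProp}(a) shows $\tfrac{c}{2}R\not\subseteq\ker\chi$, so there is $s\in R$ with $\chi\big(\tfrac{c}{2}s\big)\neq 1$. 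Choosing $a=(1,0,\dots,0)$ and $b=(s,0,\dots,0)$ yields $ba=s$ and $(X(a)Z(b))^c\neq I$, so this operator has order not dividing~$c$ and therefore $N\nmid c$. I expect this last step to be the crux: everything else is elementary manipulation of the power formula, but the nonvanishing of the phase is exactly where the Frobenius property (through the generating character) is needed.
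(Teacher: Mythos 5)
Your proof is correct and follows essentially the same route as the paper's: reading off orders from Lemma~\ref{L-orderXaZb}, reducing $\tfrac{\ell(\ell-1)}{2}$ mod~$c$ in the two parity cases, and using the generating-character property to produce the nonvanishing phase for even~$c$ (you invoke Remark~\ref{R-FrobProp}(a) on the ideal $\tfrac{c}{2}R$, while the paper equivalently notes that $\tfrac{c}{2}\cdot\chi$ being trivial would contradict $r\mapsto r\chi$ being an isomorphism). The only cosmetic differences are your cleaner witness $X(e_1)$ for $c\mid N$ and the symmetric swap of the roles of $a$ and $b$ in the last step.
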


\begin{proof}
First of all, from Lemma~\ref{L-orderXaZb} it follows that~$c$ divides $|X(e_1)Z(e_1)|$, where $e_1=(1,0,\ldots,0)$, and thus
\[
    c\mid N.
\]
(1) Let~$c$ be odd. Then $(c-1)/2$ is an integer and thus $c(c-1)/2=0$ in~$R$.
But then $(X(a)Z(b))^c=I$ by Lemma~\ref{L-orderXaZb}.
\\
(2) Let~$c$ be even.
Then $2c(2c-1)/2=0$ in~$R$ and therefore Lemma~~\ref{L-orderXaZb} implies $\big(X(a)Z(b)\big)^{2c}=I$ for all $(a,b)\in R^{2n}$.
Hence $N\mid 2c$.
It remains to find $(a,b)\in R^{2n}$ such that $\big(X(a)Z(b)\big)^c\neq I$.
Notice first that in~$R$ we have $c(c-1)/2=-c/2$.
Set $b=(1,\ldots,0)$ and $a=(\alpha,0,\ldots,0)$, where $\alpha\in R$.
Then $\big(X(a)Z(b)\big)^c=\chi((-c/2)\alpha)I$ and it suffices to show the existence of some $\alpha\in R$ such that $\chi((c/2)\alpha)\neq1$.
This can be established as follows.
Suppose $\chi((c/2)\alpha)=1$ for all $\alpha\in R$.
Then  the character $c/2\cdot\chi$ is trivial (see~\eqref{e-bimodule}).
But this contradicts the fact that~$\chi$ is a generating character, which means that $r\longmapsto r\chi$ is an $R$-isomorphism.
\end{proof}

Now we are ready to define the Pauli group.

\begin{defiprop}\label{D-Pauli}
Let $\Char(R)=c$ and~$N$ be as in Theorem~\ref{T-exponent}. Furthermore, let $\omega\in\C^*$ be a primitive $N$-th root of unity.
Then $\chi(\alpha)\in\group{\omega}:=\{1,\omega,\ldots,\omega^{N-1}\}$ for all $\alpha\in R$.
As a consequence, the subset
\[
   \cP_n:=\{\omega^\ell X(a)Z(b)\mid \ell\in\Z,\,a,b\in R^{n}\}\subseteq\cU(q^n)
\]
 is a subgroup of~$\cU(q^n)$, called the \emph{$n$-qubit Pauli group associated with the error basis} $\cE_n$.
The elements of~$\cP_n$ are called \emph{Pauli operators}.
Furthermore, the map
\[
  \Psi:\cP_n\longrightarrow R^{2n},\quad \omega^\ell X(a)Z(b)\longmapsto (a,b),
\]
is a surjective group homomorphism with $\ker\Psi=\{\omega^\ell I\mid \ell\in\Z\}$.
The latter is also the center of~$\cP_n$.
\end{defiprop}

\begin{proof}
First of all, $1=\chi(0)=\chi(c\alpha)=\chi(\alpha)^c$ implies $\chi(\alpha)\in\group{\omega}$ for all $\alpha\in R$.
Together with Lemmas~\ref{L-MultEn} and~\ref{L-orderXaZb} this establishes that~$\cP_n$ is indeed a subgroup of~$\cU(q^n)$.
The same lemmas also prove the remaining statements about~$\Psi$ and the center.
\end{proof}

The space~$\C^{q^n}$ carries a trace-inner product in the natural way and the error basis~$\cE_n$ forms an ON-basis.

\begin{rem}\label{R-TracePn}
\begin{alphalist}
\item Let $e=\omega^\ell X(a)Z(b)\in\cP_n$, where $\ell\in\Z,\,a,b\in R^n$. Then the trace of the endomorphism~$e$ is
      \[
        \Tr(e)=\left\{\begin{array}{cl}\omega^\ell q^n,&\text{if }(a,b)=(0,0),\\ 0,&\text{otherwise.}
        \end{array}\right.
      \]
      This can be seen as follows.
      If $a\neq0$, then all diagonal entries of the permutation matrix $X(a)$ are zero (see~\eqref{e-XZ})
      and thus so are those of $X(a)Z(b)$ because $Z(b)$ is a diagonal matrix. Hence the trace is zero.
      Let now $a=0$. Since $Z(b)=\text{diag}\big(\chi(xb)\mid x\in R^n\big)$, we obtain
      $\Tr\big(\omega^\ell Z(b)\big)=\omega^\ell \sum_{x\in R^n}\chi(xb)=\omega^\ell\sum_{x\in R^n}(b\chi)(x)$.
      Now the orthogonality relations~\eqref{e-RChar} and the fact that~$\chi$ is a generating character yield the desired result.
\item The same reasoning as in~(a) shows that the set~$\cE_n$ is an ON-basis of $\C^{q^n\times q^n}$
      with respect to the hermitian inner product
      \begin{equation}\label{e-TraceInner}
        \inner{P}{\hat{P}}:=\frac{1}{q^n}\Tr(P\hat{P}^\dagger).
      \end{equation}
      Indeed, for $P=X(a)Z(b)$ and $\hat{P}=X(\hat{a})Z(\hat{b})\in\cE_n$ we have
      $P\hat{P}^\dagger=X(a)Z(b)Z(\hat{b})^\dagger X(\hat{a})^\dagger=\chi(-\hat{a}(b-\hat{b}))X(a-\hat{a})Z(b-\hat{b})$
      by Lemma~\ref{L-MultEn}.
      In the language of Knill~\cite{Kn96} (see also~\cite[p.~4893]{KKKS06}), the set~$\cE_n$ forms a \emph{nice error basis}.
\end{alphalist}
\end{rem}

The following inner product will allow us to translate the noncommutativity of~$\cP_n$ to the $R^{2n}$ setting via the homomorphism~$\Psi$.

\begin{defi}\label{D-Inner}
The \emph{symplectic inner product} on $R^{2n}$ is defined as
\[
   \inners{\,\cdot\,}{\,\cdot\,}: R^{2n}\times R^{2n},\quad \inners{(a,b)}{(a',b')}=ba'-b'a.
\]
For~$C\subseteq R^{2n}$ we define $C^\perp:=\{v\in R^{2n}\mid \inners{v}{w}=0\text{ for all }w\in C\}$.
If $C$ is a submodule of~$R^{2n}$ we call $C^\perp$ the \emph{dual module}.
As usual, $C$ is called \emph{self-orthogonal} (resp.\ \emph{self-dual}) if $C\subseteq C^\perp$ (respectively, $C=C^\perp$).
\end{defi}

Since  every vector is orthogonal to itself (isotropic), every cyclic $R$-module in $R^{2n}$ (that is, a module generated by one vector) is self-orthogonal.

Note that $\inners{\,\cdot\,}{\,\cdot\,}$ is bilinear thanks to the commutativity of~$R$.
Of course, the symplectic inner product can be expressed in terms of the standard dot product (that is, matrix multiplication):
\begin{equation}\label{e-sympldot}
  \inners{(a,b)}{(a',b')}=\begin{pmatrix}a&b\end{pmatrix}\begin{pmatrix}0&-I_n\\ I_n&0\end{pmatrix}\begin{pmatrix}a'\\b'\end{pmatrix}
  \ \text{ for all }\ a,b,a',b'\in R^n.
\end{equation}

Before studying the dual codes and self-orthogonality in detail, let us record the following well-known and helpful fact about modules over Frobenius rings; see~\cite[Cor.~5]{HoLa01}.

\begin{rem}\label{R-CCperpCard}
Let $C\subseteq R^{2n}$ be a submodule. Then $|C|\!\cdot\!|C^\perp|=|R^{2n}|$.
\end{rem}

The next result is the reason for why in Definition~\ref{D-Stab}(2) below we require stabilizer codes to be $R$-submodules (instead of just additive subgroups).

\begin{prop}\label{P-ChiInner}
Let $C\subseteq R^{2n}$ be a submodule. Then
\[
  C^{\perp}=\{v\in R^{2n}\mid \chi(\inners{v}{w})=1\text{ for all }w\in C\}.
\]
\end{prop}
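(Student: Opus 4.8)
The plan is to prove the two inclusions separately; all the content sits in the reverse inclusion and consists of combining the bilinearity of the symplectic form with the submodule hypothesis on~$C$ and the defining property of a generating character.

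First I would dispatch the inclusion ``$\subseteq$'': if $v\in C^\perp$, then $\inners{v}{w}=0$ for all $w\in C$, hence $\chi(\inners{v}{w})=\chi(0)=1$, so $v$ lies in the right-hand set.

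For ``$\supseteq$'', fix $v$ in the right-hand set and fix $w\in C$. Because $C$ is an $R$-submodule we have $rw\in C$ for every $r\in R$, and bilinearity of $\inners{\,\cdot\,}{\,\cdot\,}$ (valid since~$R$ is commutative, see Definition~\ref{D-Inner}) gives $\inners{v}{rw}=r\inners{v}{w}$. Therefore $\chi\big(r\inners{v}{w}\big)=1$ for all $r\in R$, which says precisely that the principal ideal $\inners{v}{w}R$ is contained in $\ker\chi$. Since~$\chi$ is a generating character of the Frobenius ring~$R$, Remark~\ref{R-FrobProp}(a) tells us that the only ideal of~$R$ contained in $\ker\chi$ is the zero ideal; hence $\inners{v}{w}R=0$, so $\inners{v}{w}=0$. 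As $w$ ranged over all of~$C$, this shows $v\in C^\perp$.

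I do not expect a genuine obstacle here: the statement is morally ``a generating character separates ideals'', transported through the bilinear form. The one point to be careful about is that the submodule structure of~$C$ is essential — it is exactly what allows the passage from the single value $\chi(\inners{v}{w})$ to control of the whole ideal $\inners{v}{w}R$ — and this is also, as the remark preceding the proposition notes, the reason stabilizer codes are required to be $R$-submodules rather than merely additive subgroups.
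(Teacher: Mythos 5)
Your proof is correct and takes essentially the same approach as the paper: both use bilinearity plus the submodule hypothesis to produce an ideal inside $\ker\chi$, then invoke Remark~\ref{R-FrobProp}(a). The only cosmetic difference is that the paper works with the single ideal $I_v=\{\inners{v}{w}\mid w\in C\}$ all at once, whereas you treat each principal ideal $\inners{v}{w}R$ separately for fixed $w$.
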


\begin{proof}
``$\subseteq$'' is obvious.
For ``$\supseteq$'' let $v\in R^{2n}$ such that $\chi(\inners{v}{w})=1$ for all $w\in C$.
The bilinearity of~$\inners{\cdot}{\cdot}$ implies that
$I_v:=\{\inners{v}{w}\mid w\in C\}$ is an ideal of~$R$.
By choice of~$v$ this ideal is contained in the kernel of~$\chi$.
Since~$\chi$ is a generating character, Remark~\ref{R-FrobProp}(a) implies~$I_v=\{0\}$.
All of this shows that $v\in C^\perp$.
\end{proof}

Now we are ready to define the following objects in our setting.
Recall the homomorphism~$\Psi$ and its kernel from Definition~\ref{D-Pauli}.
Later in Definition~\ref{D-Dist} we will turn to the distance of stabilizer and quantum stabilizer codes.
All of this will show that the study of stabilizer codes is easier than that of quantum stabilizer codes, yet provides
very useful information about the error-correcting quality of the latter.

\begin{defi}\label{D-Stab}
\begin{arabiclist}
\item A subgroup~$S$ of~$\cP_n$  is called a \emph{stabilizer} if
      \[
         S\text{ is abelian\quad and\quad}S\cap\ker\Psi=\{I_{q^n}\}.
      \]
\item A submodule~$C$ of~$R^{2n}$ is called a \emph{stabilizer code} if $C=\Psi(S)$ for some stabilizer~$S\leq\cP_n$.
\item A subspace~$\cQ$ of~$\C^{q^n}$ is called a \emph{quantum stabilizer code} if there exists a stabilizer $S\leq\cP_n$ such that
      \[
         \cQ=\cQ(S):=\{v\in\C^{q^n}\mid Pv=v\text{ for all }P\in S\}=\bigcap_{P\in S}\text{eig}(P;1),
      \]
      where $\text{eig}(P;1)$ denotes the eigenspace of~$P$ to eigenvalue~$1$.
      If $\dim\cQ=1$, then~$\cQ$ is also called a \emph{stabilizer state}.
\end{arabiclist}
\end{defi}

Note that by the very definition of a stabilizer we have $\Psi(S)\cong S$ (as groups) for any stabilizer~$S$.
However, for $\Psi(S)$ to be a stabilizer code, we require it to be an $R$-module.
The latter does not follow from the group structure of~$S$, which is simply due to the fact that we have no
information about the relation of $X(ra)$ and $X(a)$ for $r\in R$ and similarly for~$Z(rb)$ and~$Z(b)$.
The following example illustrates this.

\begin{exa}\label{E-AddStabCode}
Let $R$ be the field~$\F_4=\{0,1,\alpha,\alpha^2\}$ and $n=1$.
Take the generating character~$\chi$ on~$\F_4$ defined by $\chi(1)=1$ and $\chi(a)=-1$ for $a \in\{\alpha,\alpha^2\}$.
Set $S=\{I_{4},X(1),Z(1),X(1)Z(1)\}$, which is an abelian subgroup of $\cP_1\leq\cU(4)$.
Then $C:=\Psi(S)=\{(0,0),\,(1,0),\,(0,1),\,(1,1)\}$ is a subgroup of~$\F_4^2$, but not a subspace.
In fact, $C=\F_2^2$.
Note that $C\not\subseteq C^\perp$, which shows that for ``$\Rightarrow$'' in Theorem~\ref{P-StabCodeSO} below the assumption of~$C$
being $R$-linear is indeed necessary.
Using the ON-basis
\[
   \cB=\{v_0=(1,0,0,0)\T,\,v_1=(0,1,0,0)\T,\,v_\alpha=(0,0,1,0)\T,\,v_{\alpha^2}=(0,0,0,1)\T\},
\]
the unitary maps $X(1)$ and $Z(1)$ have matrix representation
\[
    X(1)=\begin{pmatrix}0&1&0&0\\1&0&0&0\\0&0&0&1\\0&0&1&0\end{pmatrix},\
    Z(1)=\begin{pmatrix}1&0&0&0\\0&1&0&0\\0&0&-1&0\\0&0&0&-1\end{pmatrix}.
\]
This results in the associated quantum stabilizer codes $\cQ(S)=\group{(1,1,0,0)\T}_{\C}$.
Checking all matrices of the Pauli group~$\cP_1$ one easily verifies that the above subgroup~$S$ is the only abelian subgroup whose associated quantum stabilizer code is $\cQ(S)$.
\end{exa}

The $R$-module structure results in the following nice characterization of stabilizer codes.
As the proof will show, this characterization requires a careful definition of the Pauli group, namely with the phases from the group~$\group{\omega}$
of $N$-th roots of unity as done in Definition~\ref{D-Pauli}.

\begin{theo}\label{P-StabCodeSO}
Let $C\subseteq R^{2n}$ be a submodule. Then
\[
  C\text{ is a stabilizer code }\Longleftrightarrow C\subseteq C^\perp.
\]
Thus, the stabilizer codes are exactly the self-orthogonal submodules with respect to the symplectic inner product.
\end{theo}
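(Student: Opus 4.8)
The plan is to prove the two implications separately; ``$\Rightarrow$'' is routine, while ``$\Leftarrow$'' needs an explicit lift of~$C$ to a stabilizer. For ``$\Rightarrow$'', suppose $C=\Psi(S)$ for a stabilizer $S\le\cP_n$. Given $v=(a,b)$ and $w=(a',b')$ in~$C$, pick $P,P'\in S$ with $\Psi(P)=v$ and $\Psi(P')=w$. Since the phases in~$\cP_n$ are central, $PP'=P'P$ is equivalent to the commutativity of the underlying error-basis elements $X(a)Z(b)$ and $X(a')Z(b')$, so Lemma~\ref{L-MultEn} forces $\chi(ba'-b'a)=\chi(\inners{v}{w})=1$. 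As this holds for all $v,w\in C$, Proposition~\ref{P-ChiInner} gives $C\subseteq C^\perp$.

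For ``$\Leftarrow$'', assume $C\subseteq C^\perp$, i.e.\ $\inners{v}{w}=0$ for all $v,w\in C$. I would first observe that the preimage $G:=\Psi^{-1}(C)\le\cP_n$ is a finite \emph{abelian} group: any two of its elements agree, up to central scalars, with error-basis elements $X(a)Z(b)$ and $X(a')Z(b')$ for $(a,b),(a',b')\in C$, and these commute by Lemma~\ref{L-MultEn} because $\chi(ba'-b'a)=\chi(0)=1$. Since $\ker\Psi=\{\omega^\ell I\}$ is a central subgroup of~$G$ with $G/\ker\Psi\cong C$, it suffices to produce a complement~$S$ to $\ker\Psi$ in~$G$: such an~$S$ is automatically abelian, satisfies $S\cap\ker\Psi=\{I\}$ and $\Psi(S)=C$, hence is a stabilizer realizing~$C$ in the sense of Definition~\ref{D-Stab}. (Equivalently, one needs the abelian group extension $1\to\ker\Psi\to G\to C\to1$ to split, i.e.\ $\ker\Psi$ to be a pure subgroup of~$G$; I prefer to build the complement by hand.)

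To build~$S$, write $C=\bigoplus_{i=1}^r\Z g_i$ as an internal direct sum of cyclic groups, where $g_i=(a_i,b_i)\in C$ has additive order~$d_i$; since $\Char R$ annihilates $R^{2n}$ we have $d_i\mid\Char R$, and hence $d_i\mid N$ by Theorem~\ref{T-exponent}. The key, and only nontrivial, step is to lift each~$g_i$ to a Pauli operator of the \emph{same} order. By Lemma~\ref{L-orderXaZb} we have $\big(X(a_i)Z(b_i)\big)^{d_i}=\chi\big(\binom{d_i}{2}b_ia_i\big)I$, so I must find $t_i\in\Z$ with $\omega^{t_id_i}=\chi\big(\binom{d_i}{2}b_ia_i\big)^{-1}$, which (using $d_i\mid N$) is possible exactly when $\chi\big(\binom{d_i}{2}b_ia_i\big)\in\group{\omega^{d_i}}$. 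This I check by a parity case distinction: if $d_i$ is odd then $\binom{d_i}{2}b_ia_i=\tfrac{d_i-1}{2}\,b_i(d_ia_i)=0$; if $d_i$ is even then, setting $x:=\tfrac{d_i}{2}b_ia_i$, one has $2x=b_i(d_ia_i)=0$ and $\binom{d_i}{2}b_ia_i=(d_i-1)x=x$, so $\chi(x)^2=\chi(2x)=1$, forcing $\chi(x)\in\{1,-1\}=\{1,\omega^{c}\}$ with $c=\Char R$ even and $N=2c$, and then $d_i\mid c$ gives $\omega^{c}\in\group{\omega^{d_i}}$. In either case we may put $\tilde g_i:=\omega^{t_i}X(a_i)Z(b_i)$, which satisfies $\tilde g_i^{\,d_i}=I$.

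Finally, the $\tilde g_i$ pairwise commute (same reasoning as for~$G$), so $(\ell_i)_i\mapsto\prod_i\tilde g_i^{\,\ell_i}$ is a well-defined group homomorphism $\phi\colon\bigoplus_{i=1}^r\Z/d_i\to\cP_n$, and $\Psi\circ\phi$ sends $(\ell_i)_i$ to $\sum_i\ell_ig_i$, which is an isomorphism onto~$C$ by the choice of the~$g_i$. Hence $\phi$ is injective, $S:=\im\phi$ is abelian, $\Psi$ restricts to an isomorphism $S\to C$, and therefore $S\cap\ker\Psi=\{I\}$ and $\Psi(S)=C$, so~$C$ is a stabilizer code. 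The main obstacle is precisely the lifting step: the parity computation is where the exact value of~$N$ from Theorem~\ref{T-exponent} and the fact that $\Char R$ annihilates~$C$ both become indispensable. Indeed, were the phases drawn only from the $c$-th roots of unity rather than the $N$-th roots, an error-basis element of even order $d_i=c$ with $\chi(x)=-1$ could not be corrected by a scalar to an operator of order~$d_i$, which is exactly why Definition/Proposition~\ref{D-Pauli} insists on primitive $N$-th roots.
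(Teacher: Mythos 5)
Your proof is correct, and the ``$\Leftarrow$'' direction takes a genuinely different route from the paper. The paper also starts from the abelian preimage $\cG=\Psi^{-1}(C)$ and the goal of splitting the central extension $1\to\ker\Psi\to\cG\to C\to 1$, but it builds the complement in one stroke: it picks a character $\xi$ of $\cG$ restricting to the identification $\omega^\ell I\mapsto\omega^\ell$ on $\ker\Psi$ (such $\xi$ exists by extendability of characters of subgroups of finite abelian groups), and then sets $S=\{\xi(e^{-1})e\mid e=X(a)Z(b),\,(a,b)\in C\}$; the closure of $S$ under multiplication is a direct computation using Lemma~\ref{L-MultEn}. By contrast, you invoke the structure theorem to write $C$ as a direct sum of cyclic groups and lift each generator $g_i$ to a Pauli operator $\tilde g_i$ of the same order, after which $S$ is the subgroup generated by the $\tilde g_i$. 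The two methods buy different things: the character argument is uniform and avoids all parity casework (the fact that every element of $\cG$ has order dividing $N$ does all the work silently), whereas your generator-by-generator lift is more elementary and makes completely explicit \emph{why} the phases must be $N$-th rather than $c$-th roots of unity -- the $d_i$-even case is exactly where $\omega^c=-1$ is needed, which is the point the paper only illustrates by example (Example~\ref{E-StabSO}). Both arguments use the full strength of Theorem~\ref{T-exponent}; yours just surfaces it more visibly.

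One very small remark: you do not actually need $\tilde g_i$ to have order \emph{exactly} $d_i$, only $\tilde g_i^{d_i}=I$; injectivity of $\phi$ already follows from injectivity of $\Psi\circ\phi$. (That the order is in fact exactly $d_i$ then comes for free.) This does not affect correctness.
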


\begin{proof}
``$\Rightarrow$'' Let $C=\Psi(S)$ for some stabilizer $S\leq\cP_n$. We have to show that $C\subseteq C^\perp$.
Let $v=(a,b),\,w=(a',b')\in C$. Then $P=\Psi^{-1}(v)$ and $P'=\Psi^{-1}(w)$ are in~$S$ and therefore commute.
Write $P=\omega^\ell X(a)Z(b),\,P'=\omega^{\ell'}X(a')Z(b')$.
Then Lemma~\ref{L-MultEn} and the definition of the symplectic inner product yield $\chi\big(\inners{v}{w}\big)=1$.
Since this is true for all $w\in C$ and~$C$ is an $R$-module, Proposition~\ref{P-ChiInner} implies $v\in C^\perp$.
All of this shows $C\subseteq C^\perp$.
\\
``$\Leftarrow$''
Recall~$N$ and~$\omega$ from Definition~\ref{D-Pauli}.
Let now $C\subseteq R^{2n}$ be a self-orthogonal submodule.
Hence $ba'=b'a$ for all $(a,b),\,(a',b')\in C$.
Consider the subset
\[
     \cG=\{\omega^\ell X(a)Z(b)\mid \ell\in\Z,(a,b)\in C\}
\]
of the Pauli group~$\cP_n$.
Then Lemma~\ref{L-MultEn} implies that~$\cG$ is an abelian subgroup of~$\cP_n$.
Let $\xi$ be a nontrivial character of~$(\cG,\cdot)$ such that $\xi(\omega^\ell I)=\omega^\ell$ for all $\ell\in\Z$.
Such character~$\xi$ does indeed exist as every character of a subgroup is the restriction of a character of the entire group.
Define
\[
    S:=\{\xi\big(\chi(ab)X(-a)Z(-b)\big)X(a)Z(b)\mid (a,b)\in C\}\subseteq\cG.
\]
This is indeed a subset of~$\cG$ because~$(\chi(ab)X(a)Z(b))^N=I$, which implies that $\xi(\chi(ab)X(a)Z(b))$ is an $N$-th root of unity.
Clearly the elements of~$S$ mutually commute and $S\cap \ker\Psi=\{I\}$.
Thus it remains to show that~$S$ is a subgroup of~$\cG$, i.e., is closed under multiplication and taking inverses.
Let $(a,b),\,(a',b')\in C$. With the aid of Lemma~\ref{L-MultEn}  we compute
\begin{align*}
   &\xi\big(\chi(ab)X(-a)Z(-b)\big)X(a)Z(b)\xi\big(\chi(a'b')X(-a')Z(-b')\big)X(a')Z(b')\\
   &\hspace*{7em}=\xi\big(\chi(ab+a'b')X(-a)Z(-b)X(-a')Z(-b')\big)\chi(ba')X(a+a')Z(b+b')\\
   &\hspace*{7em}=\xi\big(\chi(ab+a'b'+ba')X(-a-a')Z(-b-b')\big)\chi(ba')X(a+a')Z(b+b')\\
   &\hspace*{7em}=\xi\big(\chi(ab+a'b'+ba')X(-a-a')Z(-b-b')\big)\xi\big(\chi(ba')I\big)X(a+a')Z(b+b')\\
   &\hspace*{7em}=\xi\big(\chi(ab+a'b'+ba'+b'a)X(-a-a')Z(-b-b')\big)X(a+a')Z(b+b')\\
   &\hspace*{7em}=\xi\big(\chi((a+a')(b+b'))X(-a-a')Z(-b-b')\big)X(a+a')Z(b+b')
\end{align*}
where in the third step we used that $\xi\big(\chi(ba')I\big)=\chi(ba')$ and in the fourth step
that  $ba'=b'a$ for all $(a,b),\,(a',b')\in C$.
All of this shows that~$S$ is closed under multiplication.
Using $(a',b')=(-a,-b)$ we also obtain closedness with respect to taking inverses.
\end{proof}

Note that by Lemma~\ref{L-orderXaZb} the stabilizer group defined above is simply
\begin{equation}\label{e-Sexpl}
    S=\{\xi(e^{-1})e\mid e=X(a)Z(b) \text{ for some }(a,b)\in C\}.
\end{equation}

One should notice that in ``$\Leftarrow$'' of the last proof we did not make use of the $R$-linearity of~$C$.
The latter is only required because stabilizer codes are linear by definition.
However, ``$\Rightarrow$'' is not true for general subgroups $\Psi(S)$ as Example~\ref{E-AddStabCode} shows.
Let us illustrate ``$\Leftarrow$'' by an example.

\begin{exa}\label{E-StabSO}
Consider the field~$\F_4=\{0,1,\alpha,\alpha^2\}$ with the generating character~$\chi$ as in Example~\ref{E-AddStabCode}.
Let~$n=1$ and $C=\im(1,\,1)\subseteq\F_4^2$, which is clearly self-orthogonal.
According to Theorem~\ref{T-exponent} and Theorem~\ref{P-StabCodeSO} we need~$\omega$ to be a fourth root of unity, that is, $\omega= i$.
Hence the group~$\cG$ in the above proof is given by
$\cG=\{i^\ell X(a)Z(a)\mid \ell\in\Z_4,\,a\in\F_4\}$.
One can check straightforwardly that~$\xi$ given by
\[
   \xi(i^\ell I)=i^\ell\text{ for }\ell\in\Z_4\ \text{ and }\
   \xi(X(\alpha^t)Z(\alpha^t))=i^{t(t+1)/2}\text{ for }t=0,1,2,
\]
induces a character of~$\cG$.
Now~\eqref{e-Sexpl} results in the stabilizer
\[
    S=\{I_4,\,X(1)Z(1),\,-iX(\alpha)Z(\alpha),\,iX(\alpha^2)Z(\alpha^2)\}
\]
satisfying $\Psi(S)=C$.
We would like to mention that in \cite[Sec.~II.B]{KKKS06} the authors use $\omega = -1$ for the definition of the Pauli group (generally,~$\omega$ is a 
primitive $p$-th root of unity, where~$p$ is the characteristic of the field).
As mentioned, including the correct phases in the definition of the Pauli group is crucial.
Indeed, if the Pauli group~$\cP_1$ were to be defined via $\omega=-1$, running the proof of \cite[Theorem 16]{KKKS06} for our example
leads to an abelian group~$S$ that is \emph{not} contained in~$\cP_1$.
\end{exa}

\begin{theo}\label{T-DimQ}
Let $S\leq\cP_n$ be a stabilizer and $\cQ=\cQ(S)\subseteq\C^{q^n}$ be the corresponding quantum stabilizer code.
Then $\dim\cQ=q^n/|S|$.
In particular, if the stabilizer code $C=\Psi(S)$ is free of dimension~$k$ over~$R$, then $\dim\cQ=q^{n-k}$.
As a special case, $\Psi(S)$ is self-dual iff~$\cQ$ is a stabilizer state (that is, one-dimensional).
\end{theo}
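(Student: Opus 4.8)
The plan is to realize $\cQ(S)$ as the image of the group-averaging operator $\Pi_S:=\frac{1}{|S|}\sum_{P\in S}P$ and to read off $\dim\cQ$ from the trace of this operator. First I would verify that $\Pi_S$ is the orthogonal projection of $\C^{q^n}$ onto $\cQ(S)$. It is Hermitian, since $S$ is a group of unitary maps and hence $P^\dagger=P^{-1}\in S$ for each $P\in S$; it is idempotent, since $Q\,\Pi_S=\Pi_S$ for every $Q\in S$ (reindex the defining sum) and therefore $\Pi_S^2=\Pi_S$. Its image is exactly $\cQ(S)$: if $v\in\cQ(S)$ then every $P\in S$ fixes $v$, so $\Pi_S v=v$; conversely $Q\,\Pi_S w=\Pi_S w$ shows that every vector of $\im\Pi_S$ is fixed by all of $S$. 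Consequently $\dim\cQ=\rk(\Pi_S)=\Tr(\Pi_S)=\frac{1}{|S|}\sum_{P\in S}\Tr(P)$.

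Next I would evaluate this sum with Remark~\ref{R-TracePn}(a): for $P=\omega^\ell X(a)Z(b)$ one has $\Tr(P)=\omega^\ell q^n$ if $(a,b)=(0,0)$ and $\Tr(P)=0$ otherwise. An element $P\in S$ has $\Psi(P)=(a,b)=(0,0)$ exactly when $P\in\ker\Psi$, and since $S$ is a stabilizer the only such element is $I_{q^n}$, which contributes $\Tr(I_{q^n})=q^n$. Hence $\sum_{P\in S}\Tr(P)=q^n$ and $\dim\cQ=q^n/|S|$ (in particular $|S|$ divides $q^n$). For the ``in particular'' clause, note that $\Psi$ restricts to a group isomorphism $S\cong C=\Psi(S)$ because $\ker(\Psi|_S)=S\cap\ker\Psi=\{I_{q^n}\}$, so $|S|=|C|$; if $C$ is free of dimension $k$ over $R$ then $|C|=|R|^k=q^k$ and thus $\dim\cQ=q^{n-k}$.

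For the last statement I would combine $|S|=|C|$ with Remark~\ref{R-CCperpCard}, which gives $|C|\cdot|C^\perp|=|R^{2n}|=q^{2n}$. If $C$ is self-dual, then $|C|=q^n$, hence $|S|=q^n$ and $\dim\cQ=1$, i.e.\ $\cQ$ is a stabilizer state. Conversely, if $\dim\cQ=1$, then $|C|=|S|=q^n$, so $|C^\perp|=q^{2n}/|C|=q^n=|C|$; since $C$ is a stabilizer code it is self-orthogonal by Theorem~\ref{P-StabCodeSO}, and $C\subseteq C^\perp$ together with $|C|=|C^\perp|$ forces $C=C^\perp$.

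The step requiring the most care is the first one: checking that the naive average of the operators in $S$ is genuinely the orthogonal projection onto their common $1$-eigenspace, so that its rank equals its trace. After that, the theorem is just bookkeeping with the nice-error-basis trace formula of Remark~\ref{R-TracePn}(a) and the Frobenius cardinality identity of Remark~\ref{R-CCperpCard}.
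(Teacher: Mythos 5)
Your proof is correct and follows essentially the same route as the paper's: form the group-averaging operator $\Pi_S=\frac{1}{|S|}\sum_{P\in S}P$, show it is an idempotent with image $\cQ(S)$, compute its trace via Remark~\ref{R-TracePn}(a) together with $S\cap\ker\Psi=\{I_{q^n}\}$, and deduce the self-dual equivalence from Remark~\ref{R-CCperpCard}. One small remark: the Hermitian check is not actually needed to get $\rk(\Pi_S)=\Tr(\Pi_S)$, since any idempotent operator has eigenvalues in $\{0,1\}$ and hence trace equal to rank; it does, however, give the extra information that $\Pi_S$ is an \emph{orthogonal} projection, which the paper leaves implicit.
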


\begin{proof}
Set $P:=\frac{1}{|S|}\sum_{e\in S}e$.
Since~$S$ is a group we have $e'P=P$ for all $e'\in S$, and therefore $P^2=P$.
Thus~$P$ is a projection.
One easily shows that $\cQ(S)=\im P$ and hence
$\dim\cQ(S)=\dim(\im P)=\Tr(P)=q^n/|S|$, where the last step follows from
Remark~\ref{R-TracePn}(a) along with the fact that $S\cap\{\lambda I_{q^n}\mid \lambda\in \C\}=\{I_{q^n}\}$.
The second statement is clear, while the last one on self-dual codes
is a consequence of Remark~\ref{R-CCperpCard}.
\end{proof}

Now we can define the relevant distance measures.
We start with the underlying weight functions.

\begin{defi}\label{D-Weight}
\begin{alphalist}
\item The \emph{symplectic weight} of a vector $(a,b)=(a_1,\ldots,a_n,b_1,\ldots,b_n)\in R^{2n}$ is defined as
       \[
          \wts(a,b):=|\{i\mid (a_i,b_i)\neq(0,0)\}|.
       \]
\item The \emph{weight} of a Pauli operator $P\in\cP_n$, denoted by $\wt(P)$, is defined as $\wts(\Psi(P))$.
      Thus, $\wt(\omega^\ell X(a)Z(b))=\wts(a,b)$.
\end{alphalist}
\end{defi}

As for Part~(a) note that the symplectic weight of $(a_1,\ldots,a_n,b_1,\ldots,b_n)\in R^{2n}$ is the same as the Hamming weight of the vector
$(a_1,b_1,a_2,b_2,\ldots,a_n,b_n)\in (R^2)^n$, where we consider $R^2$ as the alphabet.
In Part~(b) observe that,  by definition, $\wt(\omega^\ell X(a)Z(b))= \wt(X(a)Z(b))$, that is, the phase $\omega^\ell$ is irrelevant for the weight.

For the following recall from Theorem~\ref{P-StabCodeSO} that a stabilizer code $C$ is self-orthogonal.
The relevance of the distance notions will be addressed after the definition.

\begin{defi}\label{D-Dist}
\begin{alphalist}
\item For any set $A\subseteq R^{2n}$ the  \emph{minimum distance} of~$A$ is defined as
      \[
        \ds(A)=\min\{\wts(v)\mid v\in A-\{0\}\}.
      \]
\item The \emph{relative minimum distance} of a stabilizer code $C\subseteq R^{2n}$ is defined as
      \[
        \dist_{C^\perp}(C):=\ds(C^{\perp}-C) \text{ if }C\subsetneq C^\perp\ \text{ and }\
        \dist_{C^\perp}(C):=\ds(C^{\perp}) \text{ if }C=C^\perp.
      \]
\item The \emph{minimum distance} of a quantum stabilizer code $\cQ:=\cQ(S)$, denoted by $\dist(\cQ)$, is defined as the relative minimum distance of the corresponding
      stabilizer code $\Psi(S)$.
      We say that $\cQ\subseteq\C^{q^n}$ is an $(\!(n,K,d)\!)_q$-quantum code if $\dim_{\C}(\cQ)=K$ and $\dist(\cQ)=d$.
      An $(\!(n,q^k,d)\!)_q$-quantum code is also called an $[\![n,k,d]\!]_q$-code.
\item  A quantum stabilizer code $\cQ:=\cQ(S)$ and its corresponding stabilizer code~$C=\Psi(S)$ are
        called \emph{pure} if $\ds(C^\perp)=\dist_{C^\perp}(C)$; see~\cite[p.~1373]{CRSS98}. Otherwise~$\cQ$ is called \emph{impure}.
\end{alphalist}
\end{defi}

One has to keep in mind that the notation $(\!(n,K,d)\!)_q$ does not specify the underlying Frobenius ring~$R$, but only its cardinality.
The ring has to be clear from the context.

We close this section with some background on the notions used. When $R = \F_2$, Definition~\ref{D-XZa} gives $X(0)=Z(0)=I_2$ and
\[
    X(1)=\begin{pmatrix}0&1\\1&0\end{pmatrix},\
    Z(1)=\begin{pmatrix}1&0\\0&-1\end{pmatrix}.
\]
For the standard ON basis $\cB=\{v_0=(1,0)\T,\,v_1=(0,1)\T\}$ of $\mathbb{C}^2$ we have $X(1)(v_0) = v_1$ and $X(1)(v_1) = v_0$ and thus $X(1)$ is a \emph{bit-flip error}.
On the other hand, for a vector $v = av_0 + bv_1$ we have $Z(1)(v) = av_0 - bv_1$ and thus $Z(1)$ is a \emph{phase-flip error}.
It turns out that these are the fundamental errors in quantum error-correction. In this sense $\cE_1$ or $\cE_n$ play the role of an \emph{error basis}.
Definition~\ref{D-Weight}(b) is motivated by the following facts.
Let  $C_{\cP_n}(S)$ denote the centralizer of the subgroup~$S$ in~$\cP_n$.
If $\Psi(S)$ is an $R$-module, then $\Psi(C_{\cP_n}(S) -S)=C^{\perp}-C$ thanks to Lemma~\ref{L-MultEn} and Proposition~\ref{P-ChiInner}.
It is known from the theory of quantum error correction that any error outside the difference set $C_{\cP_n}(S) -S$
is automatically detected by the corresponding quantum stabilizer code $\cQ(S)$; see
\cite[Lem.~11]{KKKS06}, \cite[Lem.~1]{CRSS98}, or \cite[Thm.~3.4]{Kn96a} for a more general approach.
As a consequence, only the minimum distance on the difference set $C^{\perp}-C$ matters.
Moreover, a quantum stabilizer code~$\cQ$ of  minimum distance $\dist(\cQ)=2t+1$ can correct any~$t$ errors; see \cite[Thm.~1]{CRSS98}.
All of this shows that $\dist_{C^\perp}(C)$ is the relevant parameter for error-correcting purposes.
Finally note that for a Pauli operator $P = w^lX(a)Z(b) = w^l(P_1\otimes\cdots\otimes P_n)$, where $P_i=X(a_i)Z(b_i)$, the weight
$\wt(P) = |\{i\mid P_i \neq I\}|$ may be regarded as the Hamming weight on~$\cP_n$ (after scaling out the phases).
Therefore the entire situation resembles the classical case that a code of minimum Hamming distance $2t+1$
can correct any~$t$ errors.
In this context it is worth mentioning that the symplectic weight disregards the phases.
This is permissible due to linearity of quantum error-correction~\cite[Thm.~3.1]{KnLa97}.

\section{Codes over Local Frobenius Rings and Their Reductions}\label{SS-RedDim}
In this section we will consider stabilizer codes over a local commutative Frobenius ring.
In this case the ring has a residue field and we relate the given stabilizer code to the induced one over the residue field.
Special attention will be paid to free stabilizer codes.

From now on let~$R$ be a finite local commutative Frobenius ring with maximal ideal $\mm$.
Then $\soc(R)$ is principal, see Remark~\ref{R-FrobProp}(d), say
\begin{equation}\label{e-soc}
   \soc(R)=\alpha R.
\end{equation}
Let $\F=R/\mm$ denote the residue field of~$R$.
By the double annihilator property~\eqref{e-ann}, the socle $\alpha R$ is an $\F$-vector space in the natural way and the map
$\alpha R\longrightarrow\F$ given by $\alpha r\longmapsto \ov{r}:=r+\mm$ is an $\F$-isomorphism.
On the other hand, the map $R\longrightarrow\F,\ r\longmapsto\ov{r}$ is clearly a ring homomorphism.

We extend the above map coordinatewise to the $\F$-isomorphism
\begin{equation}\label{e-rho}
  \rho:\alpha R^{2n}\longrightarrow\F^{2n},\quad \alpha(r_1,\ldots,r_{2n})\longmapsto (\ov{r_1},\ldots,\ov{r_{2n}}).
\end{equation}
Write $\ov{r}:= (\ov{r_1},\ldots,\ov{r_{2n}})$ for $r=(r_1,\ldots,r_{2n})$.

\begin{rem}\label{R-rhopreserve}
The map~$\rho$  preserves the support of  vectors (that is, the index set of nonzero positions)
and thus is $\wts$-preserving (and Hamming-weight-preserving).
\end{rem}

We also need the $R$-linear surjective map induced by multiplication by~$\alpha$
\[
   m_{\alpha}: R^{2n}\longrightarrow\alpha R^{2n},\quad r\longmapsto \alpha r.
\]
Thanks to~\eqref{e-ann}, its kernel is given by $\ker m_{\alpha}=\mm^{(2n)}:=\{(v_1,\ldots,v_{2n})\mid v_i\in\mm\text{ for all }i\}$.

\begin{defi}\label{D-RedF}
\begin{alphalist}
\item For a submodule $C\subseteq R^{2n}$ we define the \emph{colon module}
        \[
               (C:\alpha):=m_\alpha^{-1}(C)=\{v\in R^{2n}\mid \alpha v\in C\}.
        \]
\item For any set $X\subseteq R^{2n}$ we define the \emph{reduction} $\ov{X}=\{\ov{x}\mid x\in X\}\in\F^{2n}$, where $\ov{x}=x+\mm$. In other words,
        \[
             \ov{X}:=\rho(\alpha X)=\{\rho(\alpha x)\mid x\in X\}.
        \]
        If  $C\subseteq R^{2n}$ is a submodule, then $\alpha C$ is an $\F$-vector space (and of course an $R$-module)
        and hence $\ov{C}\subseteq\F^{2n}$ is an $\F$-subspace of~$\F^{2n}$.
\end{alphalist}
\end{defi}

The following properties are obvious.

\begin{prop}\label{P-CCbar}
Let $C\subseteq R^{2n}$ be an $R$-submodule.
Then
\begin{arabiclist}
\item $C\subseteq (C:\alpha)$ and $\alpha(C:\alpha)=C\cap \alpha R^{2n}\subseteq C$.
\item $\ov{C}=\rho(\alpha C)\cong\alpha C\cong C/(C\cap\mm^{(2n)})$,
        where the isomorphisms are $R$-linear (and the first one is also $\F$-linear).
        As a consequence,
        \[
           |\ov{C}|=\frac{|C|}{|C\cap\mm^{(2n)}|}.
        \]
\end{arabiclist}
\end{prop}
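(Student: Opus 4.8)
The statement to prove is Proposition~\ref{P-CCbar}, which has two parts. The plan is to prove both parts by directly unwinding the definitions, using the kernel computation $\ker m_\alpha = \mm^{(2n)}$ from~\eqref{e-ann} together with the fact that $\rho$ is an $\F$-isomorphism.

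For part~(1), I would start with the colon module $(C:\alpha) = m_\alpha^{-1}(C)$. The inclusion $C\subseteq(C:\alpha)$ is immediate: if $v\in C$ then $\alpha v\in C$ since $C$ is an $R$-submodule, so $v\in m_\alpha^{-1}(C)$. For the second assertion, $\alpha(C:\alpha) = m_\alpha\big(m_\alpha^{-1}(C)\big)$, and since $m_\alpha$ is surjective onto $\alpha R^{2n}$, this image equals $C\cap \im m_\alpha = C\cap \alpha R^{2n}$. Finally $C\cap\alpha R^{2n}\subseteq C$ is trivial. This part is entirely routine.

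For part~(2), the middle isomorphism $\alpha C\cong C/(C\cap\mm^{(2n)})$ is the first isomorphism theorem applied to the restriction $m_\alpha|_C\colon C\to \alpha C$, whose kernel is $C\cap\ker m_\alpha = C\cap\mm^{(2n)}$; this is $R$-linear since $m_\alpha$ is. The isomorphism $\ov{C} = \rho(\alpha C)\cong \alpha C$ is just the restriction of the $\F$-isomorphism $\rho$ to the $R$-submodule (and $\F$-subspace) $\alpha C$, hence it is both $\F$-linear and $R$-linear. Composing gives all the stated isomorphisms, and the cardinality formula $|\ov C| = |C|/|C\cap\mm^{(2n)}|$ follows by counting cosets. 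I should also note, as the proposition's preamble does, that $\alpha C$ carries an $\F$-vector space structure because $\alpha$ annihilates $\mm$ (so $\mm$ acts trivially and the $R$-action factors through $\F = R/\mm$), which is what makes $\ov C$ an $\F$-subspace of $\F^{2n}$.

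There is essentially no obstacle here; the proposition is labelled ``obvious'' in the excerpt and the proof is a sequence of one-line verifications. The only point requiring a moment's care is keeping track of which isomorphisms are merely $R$-linear versus genuinely $\F$-linear: $m_\alpha$ is only $R$-linear (it does not make sense as a map of $\F$-vector spaces on all of $R^{2n}$), so the isomorphism $\alpha C\cong C/(C\cap\mm^{(2n)})$ is stated only as $R$-linear, whereas $\rho$ is $\F$-linear by construction in~\eqref{e-rho}, giving the extra structure on the first isomorphism. I would present the argument in three short numbered steps mirroring the structure above and conclude with the cardinality count.
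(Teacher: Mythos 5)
Your proof is correct, and since the paper declares Proposition~\ref{P-CCbar} ``obvious'' and omits its proof entirely, yours is exactly the routine verification the authors have in mind: part~(1) from the definition of $(C:\alpha)$ and the set-theoretic identity $m_\alpha(m_\alpha^{-1}(C))=C\cap\im m_\alpha$, and part~(2) from the first isomorphism theorem applied to $m_\alpha|_C$ together with $\rho$ being an $\F$-isomorphism. Your remark on which isomorphism is $\F$-linear versus merely $R$-linear matches the parenthetical in the statement, and the observation that $\alpha C$ is an $\F$-space because $\mm$ annihilates $\alpha$ is taken directly from Definition~\ref{D-RedF}.
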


Later on we will mainly focus on free codes in~$R^{2n}$.
In that case we have the following statement about the dimensions of~$C$ and~$\ov{C}$.

\begin{cor}\label{C-DimFree}
Let $C\subseteq R^{2n}$ be a free $R$-module. Then
\[
   \dim_R(C)=\dim_{\F}(\ov{C}).
\]
\end{cor}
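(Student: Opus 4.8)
The plan is to leverage Proposition~\ref{P-CCbar}(2), which already tells us $\ov{C}\cong C/(C\cap\mm^{(2n)})$ as $R$-modules, together with a cardinality count. Write $k=\dim_R(C)$, so $|C|=q^k$ since a free module of rank~$k$ over a ring of size~$q$ has $q^k$ elements. The task reduces to computing $|C\cap\mm^{(2n)}|$, since then $|\ov{C}|=|C|/|C\cap\mm^{(2n)}|$ by Proposition~\ref{P-CCbar}(2), and $\dim_\F(\ov C)$ is read off from $|\ov C|=|\F|^{\dim_\F(\ov C)}$; note $|\F|=|R/\mm|$ may be a proper prime-power divisor of $q$, so the count must be done carefully in terms of $|\F|$, not~$q$.

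First I would fix an $R$-basis $b_1,\dots,b_k$ of~$C$, giving an $R$-isomorphism $\phi:R^k\to C$. The key step is to identify the preimage $\phi^{-1}(C\cap\mm^{(2n)})$. I claim it equals $\mm^{(k)}:=\mm\times\cdots\times\mm$ ($k$ copies). The inclusion $\phi(\mm^{(k)})\subseteq C\cap\mm^{(2n)}$ is immediate since each $b_i\in R^{2n}$ has all coordinates in~$R$, so $m b_i\in\mm^{(2n)}$ for $m\in\mm$. For the reverse inclusion, suppose $v=\sum r_i b_i\in\mm^{(2n)}$ but some $r_{i_0}\notin\mm$, hence $r_{i_0}\in R^*$. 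Then reducing $v$ coordinatewise modulo~$\mm$ gives $\sum\ov{r_i}\,\ov{b_i}=0$ in $\F^{2n}$; but this says the reductions $\ov{b_1},\dots,\ov{b_k}$ are $\F$-linearly dependent with a nonzero coefficient $\ov{r_{i_0}}$. I would then need to argue this is impossible — i.e.\ that the reductions of a basis of a free submodule remain $\F$-independent. This follows because if $\sum\ov{r_i}\ov b_i=0$ with not all $\ov r_i=0$, lift to $r_i\in R$ (not all in~$\mm$) and observe $\sum r_i b_i\in\mm^{(2n)}$; applying $m_\alpha$ and using $\alpha\mm=0$ (from $\soc(R)=\alpha R$ and $\ann(\mm)=\soc(R)$, Remark~\ref{R-FrobProp}(b)) gives $\alpha\sum r_i b_i=0$, i.e.\ $\sum(\alpha r_i)b_i=0$; by freeness $\alpha r_i=0$ for all~$i$, so $r_i\in\ann(\alpha)=\ann(\soc(R))=\mm$ for all~$i$, contradiction. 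Hence $\phi^{-1}(C\cap\mm^{(2n)})=\mm^{(k)}$ exactly.

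It follows that $|C\cap\mm^{(2n)}|=|\mm|^k=(q/|\F|)^k$, so by Proposition~\ref{P-CCbar}(2),
\[
  |\ov C|=\frac{|C|}{|C\cap\mm^{(2n)}|}=\frac{q^k}{(q/|\F|)^k}=|\F|^k,
\]
which gives $\dim_\F(\ov C)=k=\dim_R(C)$.

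The main obstacle is the independence argument for the reductions $\ov b_i$: one must use the Frobenius structure, specifically $\ann(\alpha)=\mm$ and $\alpha\mm=0$, rather than a naive Nakayama-type appeal, because $R$ need not be a chain ring and one wants a clean exact count. An alternative, perhaps cleaner, route avoids choosing a basis: argue directly that $C\cap\mm^{(2n)}=\mm C$ (again via $m_\alpha$ and freeness to show $v\in\mm^{(2n)}\cap C\Rightarrow \alpha v=0\Rightarrow v\in\mm C$, the reverse inclusion being trivial), whence $\ov C\cong C/\mm C$, which for a free module of rank~$k$ is manifestly $\F^k$. Either way the crux is the same freeness-plus-annihilator computation.
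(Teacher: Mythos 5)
Your proof is correct and follows essentially the same route as the paper's: fix a basis, use $\alpha\mm=0$ together with freeness to show that a relation $\sum r_i b_i\in\mm^{(2n)}$ forces $\alpha r_i=0$ and hence $r_i\in\ann(\alpha R)=\mm$, then count to get $|C\cap\mm^{(2n)}|=|\mm|^k$. The detour through $\F$-linear dependence of the $\ov{b_i}$ is harmless but unnecessary (the $m_\alpha$ argument you invoke to resolve it is exactly what one applies directly), and your "cleaner alternative" $C\cap\mm^{(2n)}=\mm C$, $\ov C\cong C/\mm C\cong\F^k$ is the same computation repackaged.
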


\begin{proof}
Let $\dim C=k$, thus $|C|=|R|^k$.
We want to show that $|C\cap\mm^{(2n)}|=|\mm^{(k)}|$ because then Proposition~\ref{P-CCbar}(2) tells us that
$|\ov{C}|=|R|^k/|\mm|^k=|\F|^k$, and thus $\dim_{\F}(\ov{C})=k$, as desired.

Let $v_1,\ldots,v_k$ be a basis of~$C$ and let $v=\sum_{i=1}^kr_i v_i\in C\cap\mm^{(2n)}$ for some $r_i\in R$.
Then~\eqref{e-ann} yields $0=\alpha v$ and the linear independence of $v_1,\ldots,v_k$ implies
$\alpha r_i=0$ for all~$i$.
Hence again by~\eqref{e-ann} we conclude $r_i\in\mm$ for all~$i$, thus $(r_1,\ldots,r_k)\in\mm^{(k)}$.
Now the linear independence of $v_1,\ldots,v_k$ implies the desired identity $|C\cap\mm^{(2n)}|=|\mm^{(k)}|$.
\end{proof}

Next we derive a convenient normal form for free codes in~$R^{2n}$ using
transformations that preserves the symplectic inner product and the symplectic weight, and thus self-orthogonality.
The normal form will also result in a normal form for the dual code.

\begin{defi}\label{D-Iso}
Let $C\subseteq R^{2n}$ and $f:C\longmapsto R^{2n}$ be $R$-linear.
Then~$f$ is called a \emph{symplectic isometry} if $\wts(a) = \wts(f(a))$ and
$\inners{a}{b} = \inners{f(a)}{f(b)}$ for all $a,b \in R^{2n}$.
Two codes $C,C'\subseteq R^{2n}$ are called \emph{symplectically isometric} if there exists a symplectic isometry
$f:C\longmapsto R^{2n}$ such that $f(C)=C'$.
\end{defi}

While general symplectic isometries will be considered in Section~\ref{SS-Iso}, the following ones will suffice for this section.

\begin{defi}\label{D-SEquiv}
\begin{alphalist}
\item For every permutation $\sigma\in S_n$ define the map $\tau_\sigma: R^{2n}\longrightarrow R^{2n}$ given by
      \[
        (a_1,\ldots,a_n,b_1,\ldots,b_n)\longmapsto (a_{\sigma(1)},\ldots,a_{\sigma(n)},b_{\sigma(1)},\ldots,b_{\sigma(n)}).
      \]
\item For every $i\in[n]:=\{1,\ldots,n\}$ we define the map $\tau_i: R^{2n}\longrightarrow R^{2n}$ given by
      \[
        (a_1,\ldots,a_n,b_1,\ldots,b_n)\longmapsto(a_1,\ldots,a_{i-1},b_i,a_{i+1},\ldots,a_n,b_1,\ldots,b_{i-1},-a_i,b_{i+1},\ldots,b_n).
      \]
\end{alphalist}
\end{defi}

\begin{prop}\label{P-SympPres}
The maps $\tau_\sigma, \sigma\in S_n$, and $\tau_i,\,i\in[n],$ are symplectic isometries.
\end{prop}

\begin{proof}
It is clear that both types of maps are $R$-linear and preserve the symplectic weight.
Furthermore, it is obvious that $\tau_\sigma$ preserves the symplectic inner product.
Let now $i\in[n]$ and consider~$\tau_i$. Then for $(a,b),\,(a',b')\in R^{2n}$ we compute
\[
  \inners{\tau_i(a,b)}{\tau_i(a',b')}=\sum_{j\neq i} b_ja'_j-a_ib'_i-\sum_{j\neq i} a_jb'_j+b_ia'_i
   =\sum_{j=1}^n b_ja'_j-\sum_{j=1}^n a_jb'_j=\inners{(a,b)}{(a',b')},
\]
which proves the desired result.
\end{proof}

It is clear that if $C\subseteq R^{2n}$ is self-orthogonal then so is every
symplectically isometric code.
Recall from Proposition~\ref{P-StabCodeSO} that the stabilizer codes are exactly the self-orthogonal submodules of~$R^{2n}$.

\begin{theo}\label{T-FreeCode}
Let $C\subseteq R^{2n}$ be a free stabilizer code of dimension~$k$.
Then~$C$ is symplectically isometric to a free code~$C'$ of the form
\begin{equation}\label{e-CpMat}
  C'=\im\begin{pmatrix}I_k&M&N_1&N_2\end{pmatrix}\subseteq R^{2n},
\end{equation}
where $M\in R^{k\times(n-k)},\, N_1\in R^{k\times k},\,N_2\in R^{k\times(n-k)}$ such that $N_1+N_2M\T\in R^{k\times k}$
is a symmetric matrix.
Furthermore, if~$C'$ is as in~\eqref{e-CpMat}, then~$C'$ is self-orthogonal iff $N_1+N_2M\T$ is symmetric.
\end{theo}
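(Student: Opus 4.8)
The plan is to prove Theorem~\ref{T-FreeCode} in two stages: first produce the normal form~\eqref{e-CpMat} by applying a sequence of the symplectic isometries from Definition~\ref{D-SEquiv}, and then characterize self-orthogonality of a code already in that form by a direct computation of the symplectic inner products of the rows of the generator matrix.

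For the first stage, start with a generator matrix $G\in R^{k\times 2n}$ whose rows form a basis of~$C$ (this exists since~$C$ is free of dimension~$k$); write $G=(A\mid B)$ with $A,B\in R^{k\times n}$. The key point is that because~$R$ is local, an $R$-module is free with a basis extendable to a basis of $R^{2n}$ precisely when the reduced matrix $\ov{G}$ over the field~$\F$ has full rank~$k$; here this follows from Corollary~\ref{C-DimFree}, which gives $\dim_\F(\ov C)=k$. So over the residue field, some $k$ columns of $(A\mid B)$ are linearly independent. Using the isometries $\tau_\sigma$ (which permute coordinate pairs) and $\tau_i$ (which swap the $i$-th $X$-coordinate with the $i$-th $Z$-coordinate, up to sign), I can move a set of $k$ independent columns — one from each of $k$ distinct coordinate pairs — into the first $k$ positions; a standard argument (first choose a pair contributing an independent column, bring it to the front, swap into the $A$-block if necessary via $\tau_i$, then recurse) shows this is always possible. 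After that the leading $k\times k$ block of $A$ is invertible over~$R$ (invertibility over~$R$ is equivalent to invertibility of the reduction over~$\F$, again because~$R$ is local), so left-multiplying~$G$ by its inverse — which changes the basis of~$C$ but not~$C$ itself, and is certainly an isometry of the trivial kind — puts $G$ into the shape $(I_k\mid M\mid N_1\mid N_2)$ with the indicated block sizes. This establishes that $C$ is symplectically isometric to a code $C'$ of the form~\eqref{e-CpMat}; the symmetry condition on $N_1+N_2M\T$ will then follow from the second stage applied to $C'$, since $C'$ is self-orthogonal (being isometric to the self-orthogonal code~$C$).

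For the second stage — the ``furthermore'' — I compute directly. Let $C'=\im(I_k\mid M\mid N_1\mid N_2)$ and denote by $g_1,\dots,g_k$ its rows, so $g_i=(e_i,\,m_i,\,n_i^{(1)},\,n_i^{(2)})$ where $e_i$ is the $i$-th standard basis vector of $R^k$ and $m_i,n_i^{(1)},n_i^{(2)}$ are the $i$-th rows of $M,N_1,N_2$. Using~\eqref{e-sympldot}, the symplectic inner product of rows $i$ and $j$ is
\[
  \inners{g_i}{g_j}=(e_i\mid m_i)\begin{pmatrix}n_j^{(1)}\\ n_j^{(2)}\end{pmatrix}\T - (e_j\mid m_j)\begin{pmatrix}n_i^{(1)}\\ n_i^{(2)}\end{pmatrix}\T
  = (N_1+M N_2\T)_{ij}-(N_1+MN_2\T)_{ji},
\]
i.e.\ the $(i,j)$-entry of $(N_1+MN_2\T)-(N_1+MN_2\T)\T$. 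Now $C'$ is self-orthogonal iff $\inners{v}{w}=0$ for all $v,w\in C'$, and by bilinearity of the symplectic form this holds iff $\inners{g_i}{g_j}=0$ for all $i,j$, which by the displayed identity is exactly the condition that $N_1+MN_2\T$ be symmetric. Finally, observe that $N_1+MN_2\T$ is symmetric iff its transpose $N_1\T+N_2M\T$ is symmetric, and a matrix is symmetric iff its transpose is, so this is equivalent to $N_1+N_2M\T$ being symmetric, matching the statement. (One should double-check the exact transpose/sign bookkeeping coming from~\eqref{e-sympldot}, but either $N_1+N_2M\T$ or $N_1\T+N_2M\T$ is symmetric iff the other is, so the stated form is correct.)

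The main obstacle is the first stage: carefully justifying that the column-selection and pivoting can always be carried out using only the maps $\tau_\sigma$ and $\tau_i$, and in particular that after these moves the leading $k\times k$ block really becomes a unit of $R^{k\times k}$. This rests on the local-ring facts that (a) a square matrix over a local ring is invertible iff its reduction mod~$\mm$ is invertible, and (b) the reduced generator matrix has rank~$k$ by Corollary~\ref{C-DimFree}; with these in hand the argument is the usual Gaussian-elimination-style induction, but it should be spelled out that each elementary step is realized by one of the allowed isometries (for column permutations/swaps within a pair) or by an invertible change of basis of~$C$ (for the final normalization), the latter being harmless because it does not change the code~$C'$ as a set and hence trivially preserves weights and inner products on it. The second stage is a routine bilinear-form computation with no real difficulty beyond the transpose bookkeeping noted above.
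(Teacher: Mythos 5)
Your second stage is essentially the same direct computation the paper does (modulo some transpose/sign slips that, as you note, wash out because a matrix and its transpose are symmetric together), and using \eqref{e-sympldot} on the rows of the generator matrix is exactly the right move. The serious problem is in the first stage. You assert that because $\ov G$ has rank~$k$ over~$\F$, ``I can move a set of $k$ independent columns --- one from each of $k$ distinct coordinate pairs --- into the first $k$ positions,'' and call the justification a ``standard argument.'' That claim is false for a general free code: the maps $\tau_\sigma$ and $\tau_i$ only permute \emph{coordinate pairs} and swap within a pair, so you need the $k$ pivot columns to live in $k$ \emph{distinct} pairs, and nothing about full rank forces that. Concretely, for $n=k=2$ the free code $\im\Smallfourmat{1\ 0\ 0\ 0}{}{0\ 0\ 1\ 0}{}$ in $R^4$ has reduced rank~$2$, but its only independent columns are columns~$1$ and~$3$, i.e.\ the $X$- and $Z$-coordinates of the \emph{same} pair; no composition of $\tau_\sigma$'s and $\tau_i$'s can turn this into $\im(I_2\mid N_1)$.

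What rescues the argument --- and what your proof never invokes --- is the self-orthogonality of~$C$. The paper's proof runs the same Gaussian-elimination induction, but at each step it faces exactly three cases for where the next pivot unit can sit: in an unused pair's $X$-block, in an unused pair's $Z$-block, or only in an already-used pair's $Z$-block. The first two are handled by $\tau_\sigma$ and $\tau_i$ as you describe; the third is ruled out by computing a symplectic inner product of two rows and showing it would be a unit, contradicting $C\subseteq C^\perp$. (The code above is precisely an instance of the excluded case: it is not self-orthogonal.) Your stage one uses only freeness and the local-ring fact that a reduced-full-rank square block is invertible; that is not enough, and the place where self-orthogonality enters is the heart of the proof. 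You should restructure the induction to exhibit the three cases explicitly and show why the bad case contradicts self-orthogonality, rather than appealing to a ``standard argument'' that does not apply here.
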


The proof will show that we only need the symplectic isometries~$\tau_\sigma$ and~$\tau_i$ in order to transform~$C$ to the code~$C'$.

\begin{proof}
Let us first verify the very last statement.
Using~\eqref{e-sympldot}, we see that the code~$C'$ is self-orthogonal iff
\[
  \begin{pmatrix}I_k&M&N_1&N_2\end{pmatrix}\begin{pmatrix}-N_1\T\\-N_2\T\\I_k\\M\T\end{pmatrix}=0.
\]
But this identity is equivalent to $N_1+N_2M\T=(N_1+N_2M\T)\T$, which is exactly the stated symmetry.
This proves the last statement.

Now we turn to the first part.
Write $C=\im G$ for some matrix $G\in R^{k\times2n}$, thus the rows of~$G$ form a basis of~$C$.
We induct on~$k$.
Let~$k=1$.
Since~$R$ is local Frobenius, there exists a unit in the first (and only) row of~$G$
for otherwise it would be annihilated by~$\alpha$ (the generator of the socle),  contradicting the fact that it is a basis vector.
If the unit, say~$u$, is among the first~$n$ entries, we use an automorphism of the form~$\tau_\sigma$ to bring~$u$
into position~$(1,1)$.
If the unit~$u$ is among the last~$n$ entries, we use an automorphism of the type $\tau_i$ to bring it into one of
the first~$n$ positions and proceed with an automorphism~$\tau_\sigma$ to bring it to position~$(1,1)$.
In either case we arrive at a symplectically isometric code of the form
\[
  \im(u,g'_2,\ldots,g'_{2n})=\im(1,u^{-1}g'_1,\ldots,u^{-1}g'_{2n}),
\]
which is what we wanted.

For the induction step we proceed with the first row of~$G$ as above.
Using also row operations (which don't change the code at all),
we arrive at a symplectically isometric code of the form
\[
    \im\left(\!\!\begin{array}{c|ccc|ccc}
      1&*&\ldots&*&*&\ldots&*\\ \hline 0& & & & & & \\
      \vdots& &G_1& & &G_2&\\0& & & & & &\end{array}\!\!\right)
\]
For the next step we need to proceed with caution. We show the general case.
Assume that we arrived at a code with generator matrix

\begin{align*}
    \hat{G}:=&\left(\!\!\begin{array}{ccc|ccc|ccc|ccc}
      1& & & & & & & & & & &\\ &\ddots & &&*&&&*& & &*& \\ & &1& & & & & & & & & \\ \hline & & & & & & & & & & &\\
      &0& & &G_1& & &\ G_2\ & & & G_3 & \\ & & & & & & & & & & &\end{array}\!\!\right)
     \hspace*{-2em}\begin{array}{l}
       \left.\begin{array}{l}\phantom{1} \\ \phantom{\ } \\ \ \end{array}\right\}r \\[3.4ex]
       \left.\begin{array}{l}\phantom{1} \\ \phantom{\ } \\ \ \end{array}\right\}k-r
     \end{array}
      \\[-1ex]
    &\mbox{}\hspace*{1em} \underbrace{\hspace*{4em}}_{r}\ \underbrace{\hspace*{3.7em}}_{n-r}\ \underbrace{\hspace*{4em}}_{r}\
     \underbrace{\hspace*{3.7em}}_{n-r}
\end{align*}
Now we continue with the matrix $(G_1,G_2,G_3)$.
Again, its first row has to contain a unit.
\\
\underline{Case 1:} The first row of~$G_1$ contains a unit.
Then we may use an automorphism~$\tau_{\sigma}$ to bring the unit to position~$(1,1)$ of~$(G_1,G_2,G_3)$.
\\
\underline{Case 2:} The first row of~$G_3$ contains a unit.
Then we may use an automorphism~$\tau_i$ to bring it to the first row of~$G_1$ and then proceed as in Case~1.
\\
\underline{Case 3:} Suppose the first rows of~$G_1$ and~$G_3$ contain no unit.
Hence all their entries are in~$\mm$.
Then there must be a unit in the first
row of~$G_2$, say $(G_2)_{1,l}=u\in R^*$, where $l\in[r]$.
But then the symplectic inner product of the $l^{\rm th}$ and $(r+1)^{\rm st}$ row of~$\hat{G}$ results in $s-u$
for some $s\in\mm$.
Since this is clearly not zero, we arrive at a contradiction to the self-orthogonality of~$C$, and thus this case cannot occur.

This shows that with a symplectic isometry we obtain a unit at position $(r+1,r+1)$ of the matrix~$\hat{G}$.
But then, using row operations, we can easily derive a matrix $\tilde{G}$ of the form as for~$\hat{G}$, but with the identity
of size $(r+1)\times(r+1)$.
Proceeding in this way we arrive at the desired form.
\end{proof}

Now we easily obtain

\begin{prop}\label{P-FreeDual}
Let $C\subseteq R^{2n}$ be a free stabilizer code of the form
\[
  C=\im\begin{pmatrix}I_k&M&N_1&N_2\end{pmatrix}\leq R^{2n},
\]
for some matrices $M\in R^{k\times(n-k)},\, N_1\in R^{k\times k},\,N_2\in R^{k\times(n-k)}$ such that $N_1+N_2M\T$ is symmetric.
Then the symplectic dual~$C^\perp$ is free with dimension $2n-\dim C$ and is given by
\[
   C^\perp =\im\begin{pmatrix}I_k&0&N_1\T&0\\0&I_{n-k}&N_2\T&0\\0&0&M\T&-I_{n-k}\end{pmatrix}
      =\im\begin{pmatrix}I_k&M&N_1&N_2\\0&I_{n-k}&N_2\T&0\\0&0&M\T&-I_{n-k}\end{pmatrix}.
\]
\end{prop}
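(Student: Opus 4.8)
The plan is to check directly that the module generated by the first of the two displayed matrices -- call it $H$ -- lies in $C^\perp$, to observe that $\im H$ and $C^\perp$ are free of the same rank, and then to identify the second displayed matrix with the first by elementary row operations.

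First I would write $G=\begin{pmatrix}I_k&M&N_1&N_2\end{pmatrix}$ so that $C=\im G$, and use the matrix description \eqref{e-sympldot} of the symplectic inner product. Splitting $G=(A\mid B)$ and $H=(P\mid Q)$ into their first and last $n$ columns, the statement that every row of $H$ is symplectically orthogonal to every row of $G$ is the single matrix identity $G\left(\begin{smallmatrix}0&-I_n\\I_n&0\end{smallmatrix}\right)H\T=0$, which a block multiplication reduces to $BP\T=AQ\T$; both sides evaluate to $(N_1\mid N_2\mid 0)$, so the identity holds and hence $\im H\subseteq C^\perp$. (Notably this step does not use the symmetry of $N_1+N_2M\T$.) Next, deleting from $H$ all columns except those of the three blocks carrying $I_k$, $I_{n-k}$ and $-I_{n-k}$ leaves the invertible matrix $\diag(I_k,I_{n-k},-I_{n-k})$; therefore the rows of $H$ are $R$-linearly independent, $\im H$ is free of rank $2n-k$, and $|\im H|=q^{2n-k}$. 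Since $C$ is free of dimension $k$ we have $|C|=q^k$, so Remark~\ref{R-CCperpCard} gives $|C^\perp|=|R^{2n}|/|C|=q^{2n-k}$. A submodule of the finite set $C^\perp$ with the same cardinality must be all of $C^\perp$, so $\im H=C^\perp$; in particular $C^\perp$ is free of dimension $2n-k=2n-\dim C$, as claimed.

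It remains to see that the second displayed matrix $H'$, which differs from $H$ only in its first block row -- there $\begin{pmatrix}I_k&0&N_1\T&0\end{pmatrix}$ is replaced by $G=\begin{pmatrix}I_k&M&N_1&N_2\end{pmatrix}$ -- generates the same module. Denoting the three block rows of $H$ by $r_1,r_2,r_3$, I would verify the relation $G=r_1+Mr_2-N_2r_3$; every block of this identity is immediate except the one in columns $n+1,\dots,n+k$, which demands $N_1\T+MN_2\T-N_2M\T=N_1$, i.e.\ exactly the symmetry $N_1+N_2M\T=(N_1+N_2M\T)\T$ assumed in the hypothesis. Thus $G\in\im H$, and solving the same relation for $r_1$ shows $r_1\in\im H'$; hence $\im H'=\im H=C^\perp$.

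I do not expect a real obstacle: the whole argument is block-matrix bookkeeping. The two places where something beyond formal manipulation enters are the cardinality count in the second paragraph -- this is where the Frobenius hypothesis is invoked, through Remark~\ref{R-CCperpCard} -- and the use of the symmetry of $N_1+N_2M\T$ in the third paragraph, which is needed only to recognize that the two normal forms coincide. If one wished to avoid the counting step, one could instead prove $C^\perp\subseteq\im H$ by hand: the three identity blocks of $H$ allow one to subtract an $R$-combination of the rows of $H$ from any $w\in C^\perp$ so that only its coordinates in positions $n+1,\dots,n+k$ survive, and pairing the result symplectically against the rows of $G$ forces those to vanish; but the route via Remark~\ref{R-CCperpCard} is shorter.
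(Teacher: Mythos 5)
Your proof is correct and follows essentially the same route as the paper's: you verify $\im H\subseteq C^\perp$ directly with the matrix form \eqref{e-sympldot}, get equality from the cardinality count $|C|\cdot|C^\perp|=|R^{2n}|$ of Remark~\ref{R-CCperpCard} together with the observation that $H$ has $2n-k$ $R$-linearly independent rows, and identify the two displayed generator matrices by elementary row operations using the symmetry of $N_1+N_2M\T$. The only difference is that you spell out the block computations (and note where the symmetry is actually needed), whereas the paper leaves them to the reader.
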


\begin{proof}
Using $N_1+N_2M\T-N_1\T-MN_2\T=0$, one easily checks that the two matrices are obtained via the obvious elementary row operations.
Thus it suffices to consider the first matrix.
With the aid of~\eqref{e-sympldot} one verifies that its row space is contained in~$C^\perp$. The converse follows from
$|C^\perp|=q^{2n}/q^k=q^{2n-k}$, which is a consequence of~$R$ being Frobenius; see Remark~\ref{R-CCperpCard}.
The latter is indeed the cardinality of the code generated by the given matrix.
\end{proof}

Note that for any code $C\subseteq R^{2n}$ we have $\ov{C^\perp}\subseteq\ov{C}^\perp$.
The converse is not true in general. Take for instance, $C=\im(2,2)\subseteq\Z_4^2$ (hence $n=1$).
Then $C^\perp=\im\Smallfourmat{1}{1}{0}{2}$ and thus $\ov{C^{\perp}}=\im(1,1)$.
Furthermore, $\ov{C}=\{0\}$, so $\ov{C}^{\perp}=\F_2^2$.
As a consequence, $\ov{C^{\perp}}\subsetneq\ov{C}^\perp$.

However, if~$C$ is free we have the following result which is now immediate with Corollary~\ref{C-DimFree} and Proposition~\ref{P-FreeDual}.

\begin{cor}\label{C-CfreeCbar}
Let $C=\im G$ and $C^{\perp}=\im H$, where $G\in R^{k\times 2n}$ and $H\in R^{(2n-k)\times 2n}$ are as in
Proposition~\ref{P-FreeDual} ($H$ may be any of the two given matrices). Then
$\ov{C}=\im \ov{G}$ and $\ov{C^\perp}=\im\ov{H}$, hence we have normal forms for the reduced codes as well.
In particular, $\ov{C^\perp}=\ov{C}^\perp$.
\end{cor}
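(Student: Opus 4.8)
The plan is to deduce the statement directly from the explicit normal forms already established, with essentially no further computation. First I would note that $G=\begin{pmatrix}I_k&M&N_1&N_2\end{pmatrix}$ has its first $k$ columns forming an identity block, so the matrix $\ov{G}$ obtained by reducing all entries modulo~$\mm$ still has an identity block $I_k$ in those columns; hence the rows of $\ov{G}$ are $\F$-linearly independent and $\dim_\F(\im\ov{G})=k$. By Corollary~\ref{C-DimFree} we have $\dim_\F(\ov{C})=\dim_R(C)=k$, and since $\ov{C}=\rho(\alpha C)$ visibly contains the (reductions of the) rows of $G$ — more precisely, applying~$\rho\circ m_\alpha$ to each basis row of~$G$ yields the corresponding row of $\ov{G}$ by the coordinatewise definition of~$\rho$ in~\eqref{e-rho} — we get $\im\ov{G}\subseteq\ov{C}$, and equality follows by the dimension count. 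The same argument applies verbatim to $H$: the matrix $H$ in Proposition~\ref{P-FreeDual} also has an identity block (the first $k$ columns of the first presentation, or reading the $(n-k)\times(n-k)$ and remaining blocks, one checks $\ov{H}$ has full row rank $2n-k$ directly), so $\im\ov{H}\subseteq\ov{C^\perp}$ and $\dim_\F(\im\ov{H})=2n-k=\dim_R(C^\perp)=\dim_\F(\ov{C^\perp})$, whence $\ov{C^\perp}=\im\ov{H}$.

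For the final assertion $\ov{C^\perp}=\ov{C}^\perp$, I would combine two facts. On one hand $\ov{C^\perp}\subseteq\ov{C}^\perp$ always holds (this was noted just before Corollary~\ref{C-CfreeCbar}; it is immediate since $\rho$ preserves the symplectic form up to the field reduction). On the other hand, $\dim_\F(\ov{C}^\perp)=2n-\dim_\F(\ov{C})=2n-k$ because the symplectic form over a field is nondegenerate, and we have just computed $\dim_\F(\ov{C^\perp})=2n-k$ as well. Two $\F$-subspaces of $\F^{2n}$, one contained in the other and of equal dimension, must coincide; this gives $\ov{C^\perp}=\ov{C}^\perp$.

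The only point requiring a little care — and the closest thing to an obstacle — is verifying that $\ov{H}$ indeed has full row rank $2n-k$ over~$\F$, i.e.\ that reduction does not collapse the rows of the dual generator matrix. Using the second presentation of $C^\perp$ in Proposition~\ref{P-FreeDual}, namely $\begin{pmatrix}I_k&M&N_1&N_2\\0&I_{n-k}&N_2\T&0\\0&0&M\T&-I_{n-k}\end{pmatrix}$, this is transparent: the block lower-triangular shape with $I_k$, $I_{n-k}$, and $-I_{n-k}$ on the "diagonal" of the $(1,1)$-, $(2,2)$- and $(3,4)$-positions persists under reduction, so $\ov{H}$ has full row rank regardless of what $M,N_1,N_2$ reduce to. Alternatively one invokes Corollary~\ref{C-DimFree} applied to the free code $C^\perp$ (free by Proposition~\ref{P-FreeDual}) to get $\dim_\F(\ov{C^\perp})=2n-k$ without inspecting $\ov{H}$ at all, and then the containment $\im\ov{H}\subseteq\ov{C^\perp}$ together with $\dim_\F(\im\ov{H})\le 2n-k$ forces rank exactly $2n-k$. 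I would present the argument via Corollary~\ref{C-DimFree} since it is the cleanest and reuses machinery already in place.
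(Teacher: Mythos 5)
Your proof is correct and takes essentially the route the paper intends (the paper leaves the proof as "immediate" from Corollary~\ref{C-DimFree} and Proposition~\ref{P-FreeDual}). The only cosmetic difference is in the final step: the paper's phrasing "in particular" suggests reading $\ov{C}^\perp=\im\ov{H}$ off by applying Proposition~\ref{P-FreeDual} to the reduced code $\ov C=\im\ov G$ over $\F$ (the symmetry condition on $N_1+N_2M\T$ survives reduction), whereas you deduce it from the already-noted containment $\ov{C^\perp}\subseteq\ov{C}^\perp$ together with a dimension count via nondegeneracy of the symplectic form over $\F$ --- both one-line arguments that land in the same place.
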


We close this section with the following direct consequence of Proposition~\ref{P-FreeDual} on the distance of~$C^\perp$.

\begin{rem}\label{R-dsCperp}
If~$C\subseteq R^{2n}$ is a free stabilizer code of dimension~$k$, then $\ds(C^\perp)\leq k$.
\end{rem}

Be aware that~$C^\perp$ is in general not self-orthogonal (unless it is self-dual) and thus not a stabilizer code.
Moreover, as explained at the end of Section~\ref{SS-StabCode}, the relevant parameter for measuring the error-correcting quality
of the associated quantum stabilizer code is the relative distance $\dist_{C^\perp}(C)$ and not $\ds(C^\perp)$.
We will turn to $\dist_{C^\perp}(C)$ in the next section.

\section{On the Distance of a Stabilizer Code and its Reduction}\label{SS-Dist}

In this section we compare the distance and relative distance of a stabilizer code to those of its reduction.
We will see that for free stabilizer codes the relative distance is always at most the relative distance of its
reduction (and often equal).

We start with the distance $\ds(C)$, which is easier to deal with.
The following result was proven in~\cite[Thm.~4.2]{NoSa00} for linear codes over chain rings along with the Hamming distance.
For local Frobenius rings we need a slightly more involved argument for the proof than for chain rings.

\begin{theo}\label{T-DistEst}
Let $C \subseteq R^{2n}$ such that $\ov{C}\neq0$. Then $\ds(C) = \ds\big(\ov{(C:\alpha)}\big) \leq \ds(\ov{C}).$
\end{theo}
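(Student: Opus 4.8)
The plan is to establish the two assertions separately: first the inequality $\ds\big(\ov{(C:\alpha)}\big)\le\ds(\ov{C})$, then the equality $\ds(C)=\ds\big(\ov{(C:\alpha)}\big)$. The inequality is the easy half. Since $C\subseteq(C:\alpha)$ by Proposition~\ref{P-CCbar}(1), we get $\ov{C}\subseteq\ov{(C:\alpha)}$, and the minimum distance can only decrease (or stay the same) when we pass to a larger set; so $\ds\big(\ov{(C:\alpha)}\big)\le\ds(\ov{C})$ is immediate, using also that $\ov{C}\neq 0$ guarantees the minima are over nonempty sets of nonzero vectors.

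For the equality $\ds(C)=\ds\big(\ov{(C:\alpha)}\big)$ I would argue both inequalities. For ``$\ge$'': take a nonzero $v\in C$ of minimal weight $\ds(C)$. Some power of $\alpha$-multiplication lands us in the socle: more precisely, since $v\neq 0$, there is a smallest integer so that $v$ generates a cyclic module whose intersection with $\alpha R^{2n}$ is nonzero; concretely one shows there exists $r\in R$ with $rv\neq 0$ but $\mm\cdot(rv)=0$, i.e. $rv\in\soc(R)^{2n}=\alpha R^{2n}$, say $rv=\alpha w$ for some $w\in R^{2n}$. Then $\alpha w=rv\in C$, so $w\in(C:\alpha)$, and $\rho(\alpha w)=\ov{w}\in\ov{(C:\alpha)}$ is nonzero. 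Because $rv$ has support contained in the support of $v$, Remark~\ref{R-rhopreserve} gives $\wts(\ov{w})=\wts(\alpha w)=\wts(rv)\le\wts(v)=\ds(C)$, hence $\ds\big(\ov{(C:\alpha)}\big)\le\ds(C)$. For ``$\le$'': take a nonzero $\ov{w}\in\ov{(C:\alpha)}$ of minimal weight, coming from $w\in(C:\alpha)$, so $\alpha w\in C$ and $\alpha w\neq 0$ (since $\ov{w}=\rho(\alpha w)\neq 0$). Then $\alpha w$ is a nonzero element of $C$ with $\wts(\alpha w)=\wts(\ov{w})$ by Remark~\ref{R-rhopreserve}, so $\ds(C)\le\ds\big(\ov{(C:\alpha)}\big)$. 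Combining the two gives the equality.

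The main obstacle is the first direction of the equality, specifically producing from an arbitrary minimal-weight $v\in C$ an element of the socle (times a ring element) whose support does not grow. The clean way to justify the existence of $r$ with $rv\neq0$ and $\mm(rv)=0$ is to use the double annihilator property and the structure of $R$: the annihilator ideals form a chain near the bottom because $\soc(R)$ is the unique minimal ideal (Remark~\ref{R-FrobProp}(d)); pick $r$ in a suitable ideal so that $rv$ is a nonzero element killed by $\mm$, hence lies in $\ann(\mm)^{2n}=\soc(R)^{2n}=\alpha R^{2n}$ by~\eqref{e-ann}. One must be slightly careful that multiplication by $r$ is applied coordinatewise and that $rv\neq 0$ can indeed be arranged — this is where the argument is ``slightly more involved'' than in the chain ring case, since $R$ need not be a chain ring and one cannot simply multiply by a power of a uniformizer. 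Once this element is in hand, the support-preservation of $\rho$ (Remark~\ref{R-rhopreserve}) and the defining property of the colon module close the argument routinely.
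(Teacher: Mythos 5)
Your overall structure is the same as the paper's: the inequality $\ds\big(\ov{(C:\alpha)}\big)\leq\ds(\ov{C})$ follows from $\ov{C}\subseteq\ov{(C:\alpha)}$ (with $\ov{C}\neq 0$ ensuring the minima are taken over nonempty sets), and the equality $\ds(C)=\ds\big(\ov{(C:\alpha)}\big)$ is proved by two opposite inequalities, the hard one being the existence, for each nonzero $v\in C$, of an $r\in R$ with $rv\neq 0$ and $\mm\cdot(rv)=0$, so that $rv\in\alpha R^{2n}$ and $\wts(rv)\leq\wts(v)$. Your easy direction (using $\alpha w\in C$, $\alpha w\neq 0$, for $w\in(C:\alpha)$ with $\ov{w}$ nonzero of minimal weight) matches the paper's use of $\alpha(C:\alpha)=C\cap\alpha R^{2n}\subseteq C$.

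However, the justification you offer for the existence of $r$ is not correct. You assert that ``the annihilator ideals form a chain near the bottom because $\soc(R)$ is the unique minimal ideal.'' This does not follow: uniqueness of the minimal ideal says only that every nonzero ideal contains $\soc(R)$; it does not linearly order the annihilator ideals of elements of $R^{2n}$, and in a general local Frobenius ring (as opposed to a chain ring) they are not a chain. Nor does the double annihilator property by itself produce $r$. The ingredient you actually need is the nilpotency of $\mm$, which holds in every finite local ring: $Rv$ is a nonzero finite module, so there is a minimal $k\geq 1$ with $\mm^k v=0$, and any nonzero $rv\in\mm^{k-1}v$ satisfies $\mm(rv)=0$, hence $rv\in\ann(\mm)^{2n}=\soc(R)^{2n}=\alpha R^{2n}$ by~\eqref{e-ann}. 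That last step is the only place where the Frobenius hypothesis (the socle being principal) enters. The paper reaches the same element by an explicit iteration: write $\mm=(z_1,\ldots,z_t)$ and repeatedly multiply the running vector by the highest power $z_s^{k_s}$ of the smallest-index generator that does not yet annihilate it, so that after at most $t$ rounds every $z_i$ annihilates the result. Either argument closes the gap you yourself flag with ``one must be slightly careful that $rv\neq 0$ can indeed be arranged''; the reason you actually supply in its place is not the right one.
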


\begin{proof}
The assumption $\ov{C}\neq0$ means that $C$ is not contained in $\mm^{(2n)}$, and thus $(C:\alpha) \nsubseteq \mm^{(2n)}$, which in turn yields
$\alpha(C:\alpha) \neq 0$.
Now the second containment in Proposition~\ref{P-CCbar}(1) implies $\ds(C) \leq \ds(\alpha(C:\alpha))$.
Making use of the $\wts$-preserving property of the isomorphism $\rho$ we conclude
\[
  \ds(C) \leq \ds\big(\alpha(C:\alpha)\big) = \ds\big(\rho\left(\alpha(C:\alpha)\right)\big) = \ds\big(\overline{(C:\alpha)}\big).
\]
For the reverse inequality, we have to show that $\ds(\alpha(C:\alpha)) \leq \ds(C)$.
We will do so by showing that
\begin{equation}\label{e-lambdav}
  \text{for every $v \in C-\{0\}$ there exists $\lambda\in R$ such that $0 \neq \lambda v\in\alpha(C:\alpha)$.}
\end{equation}
Since $0<\wts(\lambda v) \leq \wts(v)$, this proves $\ds(\alpha(C:\alpha)) \leq \ds(C)$.

Let $v\in C-\{0\}$ and set $v^{(0)} := v$.
Write $\mm = (z_1, \ldots, z_t)$.
Note that $v^{(0)}\in \alpha(C:\alpha)$ iff $v^{(0)} = \alpha w$ for some $w\in R^n$.
If $v^{(0)}\in\alpha(C:\alpha)$, we are done.
Otherwise, the entries of~$v^{(0)}$ are not all in $\alpha R=\ann(\mm)$.
Thus there exists a minimal index~$s$ such that $z_sv^{(0)}\neq0$.
Let~$k_s$ be maximal such that $v^{(1)} := z_s^{k_s}v^{(0)} \neq  0$.
Maximality implies $z_sv^{(1)} = 0$.

Now we proceed with~$v^{(1)}$.
If $v^{(1)}\in \alpha(C:\alpha)$, we are done.
Otherwise, there exists a minimal index $s'$ such that $z_{s'}v^{(1)}\neq0$.
Note that $s'>s$.
Let again $k_{s'}$ be maximal such that $v^{(2)}:= z_{s'}^{k_{s'}}v^{(1)} \neq 0$.
Hence $z_{s'}v^{(2)} = 0$.

Proceeding in this way, we eventually obtain a vector $v^{(t)}=\lambda v$ for some $\lambda\in\mm$ such that
$v^{(t)}\in \alpha(C:\alpha)-\{0\}$.
This establishes~\eqref{e-lambdav} and thus $\ds(\alpha(C:\alpha)) \leq \ds(C)$, and all of this proves $\ds(C)=\ds\big(\ov{(C:\alpha)}\big)$.

Finally, the containment $C\subseteq (C:\alpha)$ implies $\ov{C}\subseteq \ov{(C:\alpha)}$, and since
$\ov{C} \neq 0$ we conclude $\ds\big(\ov{(C:\alpha)}\big) \leq \ds(\ov{C}).$
\end{proof}

In general we do not have equality $\ds(C)=\ds(\ov{C})$.
In fact, the difference can be arbitrarily large.

\begin{exa}\label{E-dsCdsCbar}
This is an adaptation of \cite[Rem.~4.8(ii)]{NoSa00} to the symplectic case.
Consider the code $C\subseteq R^{2n}$ given as
\[
    C= \im\left(\begin{array}{c|c} {\mathbf 1}&{\mathbf 1}\\ \alpha I_n&\alpha I_n\end{array}\right),\ \text{ where }{\mathbf 1}=(1,\ldots,1)\in R^n.
\]
Then~$C$ is indeed self-orthogonal and clearly $\ds(C)=1$, whereas $\ds(\ov{C})=n$.
\end{exa}

However, as we show next for free stabilizer code the inequality in Theorem~\ref{T-DistEst} becomes an equality.
Recall that a generator matrix $G\in R^{k\times 2n}$ is called \emph{systematic} if it contains the identity $I_k$ as a $k\times k$-submatrix.

\begin{theo}\label{T-DistFree}
Let $C\subseteq R^{2n}$ be a free code with a systematic generator matrix, for instance, $C$ or~$C^{\perp}$ is a free stabilizer code (see Theorem~\ref{T-FreeCode} and
Proposition~\ref{P-FreeDual}).
Then $C\cap \alpha R^{2n} = \alpha C$. As a consequence $\ov{(C:\alpha)} = \ov{C}$ and $\ds(C) = \ds(\ov{C})$.
\end{theo}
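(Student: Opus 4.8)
The plan is to establish the single inclusion $C \cap \alpha R^{2n} \subseteq \alpha C$, since the reverse inclusion $\alpha C \subseteq C \cap \alpha R^{2n}$ is immediate ($\alpha C \subseteq C$ because $C$ is a module, and $\alpha C \subseteq \alpha R^{2n}$ trivially). Once this identity is in hand, the two consequences follow cheaply: by Proposition~\ref{P-CCbar}(1) we have $\alpha(C:\alpha) = C \cap \alpha R^{2n} = \alpha C$, and applying the $\wts$-preserving isomorphism $\rho$ gives $\ov{(C:\alpha)} = \rho(\alpha(C:\alpha)) = \rho(\alpha C) = \ov{C}$. Then Theorem~\ref{T-DistEst} (valid since a systematic generator matrix forces $\ov{C}\neq 0$) yields $\ds(C) = \ds(\ov{(C:\alpha)}) = \ds(\ov{C})$. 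The hypothesis that $C$ or $C^\perp$ is a free stabilizer code is subsumed by the systematic-generator-matrix assumption via Theorem~\ref{T-FreeCode} and Proposition~\ref{P-FreeDual}, so it suffices to treat the systematic case.

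For the key inclusion, write $C = \im G$ with $G \in R^{k\times 2n}$ systematic; after a column permutation (which does not affect the module-theoretic statement) we may assume $G = (I_k \mid A)$ for some $A \in R^{k\times(n')}$, $n' = 2n-k$. Let $v \in C \cap \alpha R^{2n}$. Since $v \in C$, write $v = xG$ for a unique row vector $x \in R^k$; in fact $x$ is exactly the first $k$ coordinates of $v$, because $G$ starts with $I_k$. Since $v \in \alpha R^{2n}$, every coordinate of $v$ lies in $\alpha R = \soc(R)$, in particular each entry of $x$ lies in $\alpha R$. Now I want to conclude $v \in \alpha C$, i.e., $v = \alpha w G$ for some $w \in R^k$. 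Write $x_i = \alpha t_i$ for suitable $t_i \in R$ and set $w = (t_1,\dots,t_k)$; then $\alpha w G = (\alpha w) G$ has first $k$ coordinates $\alpha t_i = x_i$, matching $v$. For the remaining coordinates we need $\alpha w A = x A$, i.e., $(\alpha w - x)A = 0$; but $\alpha w - x = (\alpha t_1 - x_1,\dots) = 0$ by construction, so this holds. Hence $v = \alpha w G \in \alpha C$.

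The one point requiring a little care — and the place I expect the only real friction — is the step $x_i \in \alpha R \Rightarrow x_i = \alpha t_i$ for some $t_i \in R$, i.e., that multiplication by $\alpha$ maps $R$ onto $\soc(R) = \alpha R$; this is simply the definition of $\alpha R$ as the principal ideal generated by $\alpha$ (Remark~\ref{R-FrobProp}(d) and~\eqref{e-soc}), so there is in fact nothing to prove. Consequently the argument is entirely elementary once one observes that a systematic generator matrix lets one read off the "message" coordinates $x$ directly from $v$, forcing them into $\alpha R$ and hence forcing $v$ itself into $\alpha C$. I would then close by spelling out the two consequences as above, citing Proposition~\ref{P-CCbar}(1), Remark~\ref{R-rhopreserve} (for $\rho$ being $\wts$-preserving), and Theorem~\ref{T-DistEst}.
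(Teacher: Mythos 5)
Your proof is correct and follows essentially the same route as the paper's: reduce to the systematic form $C=\im(I_k\mid M)$, observe that for $v=(x\mid xM)\in\alpha R^{2n}$ the first $k$ coordinates directly give $x\in(\alpha R)^k$, hence $x=\alpha y$ and $v=\alpha y(I_k\mid M)\in\alpha C$, then deduce the consequences via Proposition~\ref{P-CCbar} and Theorem~\ref{T-DistEst}. The only addition beyond the paper's argument is your explicit check that a systematic generator matrix forces $\ov{C}\neq 0$, which the paper leaves implicit.
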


\begin{proof}
Obviously $\alpha C \subseteq C \cap \alpha R^{2n}$.
For the converse we may assume without loss of generality that $C=\im\begin{pmatrix}I_k&M\end{pmatrix}$ for some $M\in R^{k\times(2n-k)}$.
Let $v \in C \cap \alpha R^{2n}$.
Then there exists $x \in R^k$ such that $v = x(I_k \mid M) = (x \mid xM) \in \alpha R^{2n}$.
But then $x=\alpha y$ for some $y\in R^k$ and $v=\alpha y(I_k \mid M)\in\alpha C$.
This establishes $C\cap \alpha R^{2n} = \alpha C$.
As for the last part, we obtain from Proposition~\ref{P-CCbar}(1) and~(2)
\[
  \ov{(C:\alpha)}=\rho\big(\alpha(C:\alpha)\big)=\rho(C\cap\alpha R^{2n})=\rho(\alpha C)=\ov{C}\subseteq\ov{(C:\alpha)},
\]
and conclude the proof with the aid of Theorem~\ref{T-DistEst}.
\end{proof}

In~\cite{NoSa00} the authors prove the above result for (not necessarily self-orthogonal) free codes over chain rings with the Hamming weight.
Hence their result may be regarded as the Hamming analogue of Theorem~\ref{T-DistFree}.

We now turn to the relative minimum distance $\dist_{C^\perp}(C)$, which is the relevant parameter for the error-correcting quality of quantum stabilizer codes; see Definition~\ref{D-Dist}(b).
We aim at relating the relative minimum distances $\dist_{C^\perp}(C)$ and $\dist_{\overline{C}^\perp}(\ov{C})$.

\begin{theo}\label{T-DistC}
Let $C\subseteq R^{2n}$ be a free stabilizer code.
Then $\rho^{-1}( \ov{C}^{\perp} - \ov{C})\subseteq C^{\perp} - C$.
As a consequence, $\dist_{C^\perp}(C)\leq\dist_{\overline{C}^\perp}(\ov{C})$.
Moreover,  equality holds if~$C$ is self-dual.
\end{theo}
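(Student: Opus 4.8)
The plan is to work from the structural description of $C$ and $C^\perp$ given in Theorem~\ref{T-FreeCode} and Proposition~\ref{P-FreeDual}. By applying a symplectic isometry we may assume $C = \im(I_k \mid M \mid N_1 \mid N_2)$ with $N_1 + N_2 M\T$ symmetric; since symplectic isometries preserve weights, inner products, and commute (in the obvious sense) with reduction modulo $\mm$ via $\rho$, this loses no generality for the inequality. Write $G$ for this generator matrix and $H$ for one of the two generator matrices of $C^\perp$ from Proposition~\ref{P-FreeDual}. By Corollary~\ref{C-CfreeCbar} we have $\ov{C} = \im\ov{G}$, $\ov{C^\perp} = \im\ov{H} = \ov{C}^\perp$, and these reductions are again in systematic form.

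The core step is the inclusion $\rho^{-1}(\ov{C}^\perp - \ov{C}) \subseteq C^\perp - C$. I would take $\bar u \in \ov{C}^\perp - \ov{C}$ and a preimage $w \in R^{2n}$ with $\rho(\alpha w) = \bar u$, so that $\alpha w \in \alpha R^{2n}$ and $w$ represents an element of $\rho^{-1}(\bar u)$; more precisely $\rho^{-1}(\bar u)$ is the coset $\alpha^{-1}$-lift, a subset of $\alpha R^{2n}$, and I want to show each element of it lies in $C^\perp - C$. First, $\bar u \in \ov{C}^\perp = \ov{C^\perp}$ means $\bar u = \rho(\alpha v)$ for some $v \in C^\perp$; since $C^\perp$ is free with systematic generator matrix, Theorem~\ref{T-DistFree} applies to $C^\perp$ and gives $C^\perp \cap \alpha R^{2n} = \alpha C^\perp$, so in fact $\rho^{-1}(\bar u) \subseteq \alpha C^\perp \subseteq C^\perp$. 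Thus it remains to show $\rho^{-1}(\bar u) \cap C = \emptyset$. Suppose some lift lies in $C$; then it lies in $C \cap \alpha R^{2n} = \alpha C$ (Theorem~\ref{T-DistFree} again), hence its image under $\rho$ lies in $\rho(\alpha C) = \ov{C}$, contradicting $\bar u \notin \ov{C}$. This establishes the inclusion.

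From the inclusion the distance inequality is immediate: $\rho$ is $\wts$-preserving (Remark~\ref{R-rhopreserve}), so $\ds(\ov{C}^\perp - \ov{C}) \ge \ds(C^\perp - C)$ whenever $\ov{C} \subsetneq \ov{C}^\perp$, which is the case when $C \subsetneq C^\perp$ (if instead $C = C^\perp$ then $C$ is self-dual and that case is handled separately); matching this against Definition~\ref{D-Dist}(b) gives $\dist_{C^\perp}(C) \le \dist_{\ov C^\perp}(\ov C)$. For the self-dual case, $C = C^\perp$ forces $\dim C = n$, and then $\ov{C} = \ov{C}^\perp$ by Corollary~\ref{C-CfreeCbar} together with $\dim_\F \ov{C} = \dim_R C = n$ (Corollary~\ref{C-DimFree}); so both relative distances are computed as $\ds$ of the full dual, namely $\ds(C^\perp) = \ds(C)$ and $\ds(\ov{C}^\perp) = \ds(\ov{C})$, and these agree by the last clause of Theorem~\ref{T-DistFree}. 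The one point that needs care — the main obstacle — is pinning down exactly what $\rho^{-1}$ of a set means here (it is a subset of $\alpha R^{2n}$, not of all of $R^{2n}$) and checking that the two applications of $C \cap \alpha R^{2n} = \alpha C$, once to $C$ and once to $C^\perp$, are both legitimate; the latter is fine precisely because both $C$ and $C^\perp$ have systematic generator matrices, which is the hypothesis under which Theorem~\ref{T-DistFree} was stated.
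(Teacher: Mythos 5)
Your proof is correct and follows essentially the same route as the paper's: lift $\bar u \in \ov{C}^\perp - \ov{C}$ to $\alpha v$ with $v \in C^\perp$ (using $\ov{C^\perp} = \ov{C}^\perp$ from Corollary~\ref{C-CfreeCbar}), rule out $\alpha v \in C$ via $C \cap \alpha R^{2n} = \alpha C$ from Theorem~\ref{T-DistFree}, derive the inequality from the $\wts$-preserving property of $\rho$, and close the self-dual case with the chain $\ds(C^\perp) = \ds(\ov{C^\perp}) = \ds(\ov{C}^\perp)$. Two small remarks: since $\rho$ is a bijection $\alpha R^{2n} \to \F^{2n}$, the set $\rho^{-1}(\bar u)$ is a single element of $\alpha R^{2n}$ (namely $\alpha v$), not a coset; and your appeal to Theorem~\ref{T-DistFree} for $C^\perp$ is superfluous, since $\alpha v \in C^\perp$ follows at once from $v \in C^\perp$ and $C^\perp$ being a module --- the only application of Theorem~\ref{T-DistFree} actually needed for the inclusion is to $C$.
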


\begin{proof}
Let $\ov{x}\in\ov{C}^\perp - \ov{C}$.
By Corollary~\ref{C-CfreeCbar} there exists $y \in C^{\perp}$ such that $\ov{y} =\ov{x}$.
Then $\rho^{-1}(\ov{x})=\rho^{-1}(\ov{y})=\alpha y\in C^\perp$.
Suppose $\alpha y\in C$. Then $y\in (C:\alpha)$, and thus $\ov{x}=\ov{y}\in\ov{(C:\alpha)}$.
But the latter is~$\ov{C}$ thanks to Theorem~\ref{T-DistFree}, and we arrive at a contradiction.
Thus $\rho^{-1}(\ov{x})=\alpha y\not\in C$.
This shows the containment $\rho^{-1}( \ov{C}^{\perp} - \ov{C})\subseteq C^{\perp} - C$.
Now  $\dist_{C^\perp}(C)\leq\dist_{\small{\overline{C}^\perp}}(\ov{C})$ follows from Remark~\ref{R-rhopreserve}.
Finally, if~$C$ is self-dual then so is $\ov{C}$, and Definition~\ref{D-Dist}(b), Corollary~\ref{C-CfreeCbar}
and Theorem~\ref{T-DistFree} yield  $\dist_{C^\perp}(C)=\ds(C^\perp)=\ds(\ov{C^\perp})=\ds(\ov{C}^\perp)=\dist_{\overline{C}^\perp}(\ov{C})$.
\end{proof}

Note that in the above statement we even have
$\rho^{-1}( \ov{C}^{\perp} - \ov{C})\subseteq \alpha C^{\perp} - C=\alpha C^{\perp}-(C\cap \alpha R^{2n})=\alpha C^{\perp}-\alpha C$.
Since $|\alpha C^{\perp}-\alpha C|=|\ov{C}^{\perp} - \ov{C}|$, we actually have equality and therefore
$\dist_{\small{\overline{C}^\perp}}(\ov{C})=\ds(\alpha C^{\perp}-\alpha C)$ since~$\rho$ is $\wts$-preserving.

\begin{conj}\label{C-RelDist}
$\dist_{\small{\overline{C}^\perp}}(\ov{C})=\dist_{C^\perp}(C)$ for any free stabilizer code $C\subseteq R^{2n}$.
\end{conj}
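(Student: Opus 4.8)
\textbf{Proof proposal for Conjecture~\ref{C-RelDist}.}
The plan is to establish the reverse inequality $\dist_{\overline{C}^\perp}(\ov{C})\leq\dist_{C^\perp}(C)$, since Theorem~\ref{T-DistC} already gives the other direction. By the remark following Theorem~\ref{T-DistC} we know that $\dist_{\overline{C}^\perp}(\ov{C})=\ds(\alpha C^\perp-\alpha C)$, so it suffices to show that for every $v\in C^\perp-C$ there exists a scalar $\lambda\in R$ with $0\neq\lambda v\in\alpha C^\perp-\alpha C$; indeed $\wts(\lambda v)\leq\wts(v)$ would then yield $\ds(\alpha C^\perp-\alpha C)\leq\ds(C^\perp-C)=\dist_{C^\perp}(C)$. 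This is exactly the strategy already used twice in the excerpt: once in the proof of Theorem~\ref{T-DistEst} (the construction of the chain $v=v^{(0)},v^{(1)},\ldots,v^{(t)}$ producing $\lambda v\in\alpha(C^\perp:\alpha)$ via successive multiplication by generators $z_i$ of~$\mm$), and once implicitly via the systematic generator matrix argument of Theorem~\ref{T-DistFree}.

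First I would apply the mechanism of Theorem~\ref{T-DistEst} to the free code $C^\perp$ (which has a systematic generator matrix by Proposition~\ref{P-FreeDual}): for $v\in C^\perp-\{0\}$ we obtain $\lambda\in R$ with $0\neq\lambda v\in\alpha(C^\perp:\alpha)$, and by Theorem~\ref{T-DistFree} applied to $C^\perp$ we have $\alpha(C^\perp:\alpha)=C^\perp\cap\alpha R^{2n}=\alpha C^\perp$; so $\lambda v\in\alpha C^\perp$. The remaining, and genuinely new, task is to guarantee that $\lambda$ can be chosen so that additionally $\lambda v\notin\alpha C$, \emph{given} that $v\notin C$. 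Here I would separate two cases. If $v\notin(C:\alpha)$, i.e.\ $\alpha v\notin C$, then $\alpha v\in\alpha C^\perp-\alpha C$ already and we are done with $\lambda=\alpha$ (noting $\alpha v\neq 0$ whenever $v\notin\mm^{(2n)}$; if $v\in\mm^{(2n)}$ one must be slightly more careful, but then $\alpha v=0$ forces $v\in(C^\perp:\alpha)$, and one argues with the reduction $\ov{v}$ directly). If instead $v\in(C:\alpha)\setminus C$, so $\alpha v\in C\cap\alpha R^{2n}=\alpha C$ but $v\notin C$, then the reductions satisfy $\ov{v}=\rho(\alpha v)\in\ov{C}$, which is the obstruction I need to circumvent.

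The main obstacle is precisely this last case: a vector $v\in C^\perp-C$ whose reduction $\ov{v}$ lands inside $\ov{C}$. Such a $v$ is invisible to the reduction map, and one must exhibit a \emph{different} scalar multiple that survives. The key observation I would try to exploit is that if $\ov{v}\in\ov C$, then $v-c\in\mm^{(2n)}$ for some $c\in C$, so $v-c$ is a nonzero vector of $C^\perp$ (nonzero since $v\notin C$) all of whose entries lie in $\mm$; now run the Theorem~\ref{T-DistEst} chain construction on $v-c$ rather than on $v$, producing $\mu\in\mm$ with $0\neq\mu(v-c)\in\alpha C^\perp$. Since $\mu\in\mm$ we have $\mu c\in\mm C$, and one must check $\mu(v-c)=\mu v-\mu c\notin\alpha C$; because $\mu\in\mm$, $\mu v\in\alpha C^\perp$ and $\mu c\in\alpha C^\perp$ as well, and the question reduces to whether $\mu(v-c)$, viewed in the $\F$-space $\alpha C^\perp\cong\ov{C}^\perp$, lies in $\ov{C}$ — equivalently whether $\rho(\mu(v-c))\in\ov{C}$. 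By iterating: pick $c$ so that moreover $v-c$ has \emph{maximal} number of coordinates in $\alpha R$ among all representatives of $v+C$, or better, work in the quotient $C^\perp/C$ and use that $C^\perp/C$ is a finite $R$-module whose socle maps isomorphically onto a nonzero subspace of $\ov C^\perp/\ov C$ (this is where Frobenius-ness and the principal socle are essential, exactly as advertised in the introduction). Making this socle argument precise — showing that the bottom layer $\alpha\cdot(C^\perp/C)$, equivalently $(\alpha C^\perp+C)/C$, reduces \emph{onto} $\ov C^\perp/\ov C$ without weight increase — is the crux; if it works it simultaneously settles the conjecture and explains why local Frobenius is the right hypothesis. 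I expect this to be subtle because $C^\perp/C$ need not be free and its socle structure interacts nontrivially with the symplectic weight.
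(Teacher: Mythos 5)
The statement you are addressing is labelled as a \emph{conjecture} in the paper, and the authors state explicitly that they cannot prove it (``We strongly believe that the conjecture is true, but are unfortunately not able to provide a proof at this point''). There is therefore no proof of record to compare your proposal against, and you are right that the crux you identify is precisely what remains open.

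That said, two concrete issues make your reduction incomplete rather than merely unfinished in its last step. First, your sufficiency claim --- ``for every $v\in C^\perp-C$ there exists $\lambda\in R$ with $0\neq\lambda v\in\alpha C^\perp-\alpha C$'' --- is false as stated. Take $R=\Z_4$, $n=2$, and the free stabilizer code $C=\im(1,0,0,0)\subseteq\Z_4^4$, so that $C^\perp=\{(a_1,a_2,0,b_2)\mid a_1,a_2,b_2\in\Z_4\}$. The vector $v=(1,2,0,0)$ lies in $C^\perp-C$, yet the only nonzero scalar multiple of $v$ contained in $\alpha R^4=2\Z_4^4$ is $2v=(2,0,0,0)\in\alpha C$, so no admissible $\lambda$ exists. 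Here $\wts(v)=2>1=\dist_{C^\perp}(C)$, so this does not refute the conjecture, but it shows your reduction must be restricted to suitable minimum-weight vectors, and establishing that at least one minimum-weight representative is ``good'' is precisely what is not addressed. Second, and more fundamentally, in the hard case $v\in(C:\alpha)\setminus C$ you propose replacing $v$ by $v-c$ for some $c\in C$ with $v-c\in\mm^{(2n)}$ and running the chain construction of Theorem~\ref{T-DistEst} on $v-c$; but the only bound this gives is $\wts(\mu(v-c))\leq\wts(v-c)$, and $\wts(v-c)$ can exceed $\wts(v)$. Since the whole argument depends on never increasing the symplectic weight, this step cannot, as written, yield the desired inequality $\ds(\alpha C^\perp-\alpha C)\leq\ds(C^\perp-C)$. (A minor slip along the way: $\mu\in\mm$ alone does not imply $\mu v\in\alpha C^\perp$ or $\mu c\in\alpha C^\perp$; it is the specific $\mu$ produced by the chain on $v-c$ that puts $\mu(v-c)$ into $\alpha C^\perp$.) Your closing idea of analyzing the socle of $C^\perp/C$ is the most promising direction, but it currently contains no weight control either, so the statement remains open, consistent with its status in the paper.
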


In other words, we believe that free stabilizer codes over local Frobenius rings do not underperform stabilizer codes over fields.
One may note that the difference set $C^\perp-C$ is much larger than $\ov{C}^\perp-\ov{C}$, yet is conjectured to have the same minimum distance.
Recall from Section~\ref{SS-StabCode} that $\cQ(C)$ is a quantum stabilizer code for the quantum state space $\C^{q^n}$, where $q=|R|$,
whereas  $\cQ(\ov{C})$ is a quantum stabilizer code for the (much smaller) quantum state space $\C^{\hat{q}^n}$, where $\hat{q}=|\F|$.
Accordingly, the relative distances provide information about the correctable errors in quite different Pauli groups; see the paragraph at the end of Section~\ref{SS-StabCode}.

We strongly believe that the conjecture is true, but are unfortunately not able to
provide a proof at this point (and we do not even have a counterexample for non-free codes).
The difficulties related to a proof will be illustrated by some examples in the next section.
We have the following partial result.

\begin{prop}\label{P-Partial}
Let $C\subseteq R^{2n}$ be a free stabilizer code and suppose $\ds(\ov{C}^{\perp})=\dist_{\small{\overline{C}^\perp}}(\ov{C})$, i.e., the reduced code~$\ov{C}$ is pure. Then
$\dist_{C^\perp}(C)=\dist_{\small{\overline{C}^\perp}}(\ov{C})$.
\end{prop}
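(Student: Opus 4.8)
The plan is to combine the inequality already recorded in Theorem~\ref{T-DistC} with a matching lower bound that exploits purity. Since Theorem~\ref{T-DistC} gives $\dist_{C^\perp}(C)\le\dist_{\ov{C}^\perp}(\ov{C})$ for \emph{every} free stabilizer code, it suffices to establish the reverse inequality $\dist_{C^\perp}(C)\ge\dist_{\ov{C}^\perp}(\ov{C})$ under the extra hypothesis that $\ov{C}$ is pure.

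First I would dispose of the self-dual case: if $C=C^\perp$, then $\ov{C}$ is self-dual as well, and Theorem~\ref{T-DistC} already yields the desired equality, so there is nothing to prove. Hence I may assume $C\subsetneq C^\perp$, in which case $\dist_{C^\perp}(C)=\ds(C^\perp-C)$ by Definition~\ref{D-Dist}(b). Since $0\in C$, the set difference $C^\perp-C$ is contained in $C^\perp-\{0\}$, and therefore $\ds(C^\perp-C)\ge\ds(C^\perp)$ straight from the definition of $\ds$ as a minimum of $\wts$ over nonzero vectors.

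The crux is then to identify $\ds(C^\perp)$ with $\ds(\ov{C}^\perp)$. Because $C$ is a free stabilizer code, its dual $C^\perp$ is free and admits a systematic generator matrix by Proposition~\ref{P-FreeDual}; consequently Theorem~\ref{T-DistFree}, applied to $C^\perp$ rather than to $C$, gives $\ds(C^\perp)=\ds(\ov{C^\perp})$, while Corollary~\ref{C-CfreeCbar} gives $\ov{C^\perp}=\ov{C}^\perp$. Combining these, $\ds(C^\perp)=\ds(\ov{C}^\perp)$, and invoking the purity hypothesis $\ds(\ov{C}^\perp)=\dist_{\ov{C}^\perp}(\ov{C})$ we obtain
\[
  \dist_{C^\perp}(C)=\ds(C^\perp-C)\ge\ds(C^\perp)=\ds(\ov{C}^\perp)=\dist_{\ov{C}^\perp}(\ov{C}).
\]
Together with Theorem~\ref{T-DistC} this forces equality.

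I do not expect a genuine obstacle here; the proof is essentially a chain of already-proven identities. The only points requiring care are the bookkeeping between the two cases in the definition of the relative distance (the self-dual case must be peeled off first) and making sure that the freeness and systematic-generator-matrix hypotheses of Theorem~\ref{T-DistFree} and Corollary~\ref{C-CfreeCbar} are applied to $C^\perp$, which is legitimate precisely because $C$ is free.
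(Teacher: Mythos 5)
Your argument is correct and follows essentially the same route as the paper's: the paper also obtains the reverse inequality from the chain $\dist_{\overline{C}^\perp}(\ov{C})=\ds(\ov{C}^{\perp})=\ds(C^{\perp})\leq\dist_{C^\perp}(C)$ (purity, then Theorem~\ref{T-DistFree}/Corollary~\ref{C-CfreeCbar} applied to~$C^\perp$, then the trivial containment), and combines it with Theorem~\ref{T-DistC}. The only cosmetic difference is that you peel off the self-dual case first; this is unnecessary, since when $C=C^\perp$ one has $\dist_{C^\perp}(C)=\ds(C^\perp)$ by definition and the inequality $\ds(C^\perp)\leq\dist_{C^\perp}(C)$ holds trivially in that case too, so the chain works uniformly.
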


\begin{proof}
Using Theorem~\ref{T-DistFree} we conclude
$\dist_{\small{\overline{C}^\perp}}(\ov{C})=\ds(\ov{C}^{\perp})=\ds(C^{\perp})\leq\dist_{C^\perp}(C)$.
\end{proof}

Note that the above assumption $\ds(\ov{C}^{\perp})=\dist_{\small{\overline{C}^\perp}}(\ov{C})$ means that at least one nonzero codeword of minimum symplectic weight in $\ov{C}^{\perp}$
does not lie in~$\ov{C}$.
This assumption is actually most often satisfied.
Indeed, consider the generator matrices of~$C$ and~$C^{\perp}$ in Proposition~\ref{P-FreeDual}.
Their reductions yield generator matrices for~$\ov{C}$ and $\ov{C}^{\perp}$; see Corollary~\ref{C-CfreeCbar}.
This shows that~$\ov{C}^{\perp}$ always contains some very low weight codewords.
The following is a special instance of this line of reasoning.

\begin{cor}\label{C-GoodFree}
Let~$C$ be a free stabilizer code such that $\ds(C)>\dim(C)$.
Then $\dist_{C^\perp}(C)=\dist_{\small{\overline{C}^\perp}}(\ov{C})$.
\end{cor}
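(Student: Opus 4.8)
The goal is to deduce Corollary~\ref{C-GoodFree} from Proposition~\ref{P-Partial}, so the plan is to show that the hypothesis $\ds(C)>\dim(C)$ forces the reduced code $\ov{C}$ to be pure, i.e., $\ds(\ov{C}^\perp)=\dist_{\small{\overline{C}^\perp}}(\ov{C})$. Once that is established, Proposition~\ref{P-Partial} immediately gives $\dist_{C^\perp}(C)=\dist_{\small{\overline{C}^\perp}}(\ov{C})$ and we are done.

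First I would set $k=\dim(C)$ and invoke Theorem~\ref{T-DistFree}, which applies since $C$ is a free stabilizer code and hence has a systematic generator matrix: it yields $\ds(\ov{C})=\ds(C)>k$. Next I would record that, by Corollary~\ref{C-DimFree}, $\dim_{\F}(\ov{C})=k$, and by Corollary~\ref{C-CfreeCbar} (together with Proposition~\ref{P-FreeDual}) the dual $\ov{C}^\perp=\ov{C^\perp}$ is free of dimension $2n-k$ and admits a systematic generator matrix whose reduction is the one displayed in Proposition~\ref{P-FreeDual}. Applying Remark~\ref{R-dsCperp} to the free code $\ov{C}$ (of dimension $k$) over the field $\F$ gives $\ds(\ov{C}^\perp)\le k$. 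Thus $\ds(\ov{C}^\perp)\le k<\ds(\ov{C})$.

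Now I would compare this with the relative distance. By definition, $\dist_{\small{\overline{C}^\perp}}(\ov{C})=\ds(\ov{C}^\perp-\ov{C})$ when $\ov{C}\subsetneq\ov{C}^\perp$, and equals $\ds(\ov{C}^\perp)$ when $\ov{C}=\ov{C}^\perp$. In the self-dual case there is nothing to prove since then $\ds(\ov{C}^\perp)=\dist_{\small{\overline{C}^\perp}}(\ov{C})$ trivially, so assume $\ov{C}\subsetneq\ov{C}^\perp$. Take any nonzero $w\in\ov{C}^\perp$ with $\wts(w)=\ds(\ov{C}^\perp)\le k<\ds(\ov{C})$. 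Since its weight is strictly smaller than the minimum weight of $\ov{C}$, we have $w\notin\ov{C}$, hence $w\in\ov{C}^\perp-\ov{C}$. Therefore $\dist_{\small{\overline{C}^\perp}}(\ov{C})=\ds(\ov{C}^\perp-\ov{C})\le\wts(w)=\ds(\ov{C}^\perp)$, and the reverse inequality $\ds(\ov{C}^\perp)\le\ds(\ov{C}^\perp-\ov{C})$ is automatic. This proves $\ds(\ov{C}^\perp)=\dist_{\small{\overline{C}^\perp}}(\ov{C})$, i.e., $\ov{C}$ is pure, and Proposition~\ref{P-Partial} finishes the argument.

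There is no serious obstacle here: the corollary is essentially a packaging of earlier results, and the only point that needs a moment's care is the observation that a minimum-weight codeword of $\ov{C}^\perp$ cannot lie in $\ov{C}$ precisely because the weight bound $\ds(\ov{C}^\perp)\le k$ from Remark~\ref{R-dsCperp} beats the weight lower bound $\ds(\ov{C})=\ds(C)>k$ coming from Theorem~\ref{T-DistFree}. I would double-check the edge case $\ov{C}=\ov{C}^\perp$ (which cannot actually occur here, since it would force $k=2n-k$ and $\ds(\ov C)\le k$ by Remark~\ref{R-dsCperp}, contradicting $\ds(\ov C)>k$, but it is cleanest to dispatch it explicitly) and make sure the purity hypothesis of Proposition~\ref{P-Partial} is quoted in exactly the stated form.
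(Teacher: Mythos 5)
Your proof follows essentially the same route as the paper's: reduce to showing $\ov{C}$ is pure, using Theorem~\ref{T-DistFree} and Corollary~\ref{C-DimFree} to get $\ds(\ov{C})=\ds(C)>k=\dim_\F(\ov{C})$, then Remark~\ref{R-dsCperp} applied to $\ov{C}$ to get $\ds(\ov{C}^\perp)\le k$, so that minimum-weight vectors of $\ov{C}^\perp$ cannot lie in $\ov{C}$, and finally invoking Proposition~\ref{P-Partial}. The paper states this more tersely, but the logic is identical; your only additions (explicitly dispatching the impossible case $\ov{C}=\ov{C}^\perp$, and noting that $\ov{C}$ is a free stabilizer code over $\F$ so that Remark~\ref{R-dsCperp} applies) are correct fillings-in of the same argument rather than a different approach.
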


\begin{proof}
The assumption implies $\ds(\ov{C})>\dim(\ov{C})$; see Corollary~\ref{C-DimFree} and Theorem~\ref{T-DistFree}.
Thus Remark~\ref{R-dsCperp} applied to~$\ov{C}$ yields $\dist_{\small{\overline{C}^\perp}}(\ov{C})=\ds(\ov{C}^{\perp})$ and Proposition~\ref{P-Partial}
concludes the proof.
\end{proof}

\section{Examples for Stabilizer Codes and Their Reductions}\label{SS-Exa}
In this section we present some examples on the relative minimum distance of a code and its reduction.
We close with some comments and comparisons to the literature.

We start with an example of a free stabilizer code~$C$ for which $\ov{C}$ is not pure, yet $\dist_{C^\perp}(C)=\dist_{\small{\overline{C}^\perp}}(\ov{C})$.
This shows that in Proposition~\ref{P-Partial} the assumption that~$\ov{C}$ be pure is not necessary.

\begin{exa}\label{E-FreeCodeNonPure}
Let $R=\Z_4,\,n=7$, and $C=\im G$, where
\[
  G=\left(\!\!\begin{array}{ccccccc|ccccccc}1&0&0&0&0&0&1&1&0&0&0&0&0&0\\0&1&0&0&0&0&3&3&2&3&2&0&1&1\\0&0&1&0&0&0&0&1&2&2&3&3&3&3\\
   0&0&0&1&0&0&2&1&3&1&3&0&2&3\\0&0&0&0&1&0&1&2&3&2&3&1&3&2\\0&0&0&0&0&1&2&0&3&1&0&3&2&0\end{array}\!\!\right)\in\Z_4^{6\times 14}.
\]
Thus~$C$ is a free code of dimension~$6$. It is self-orthogonal with dual code $C^\perp=\im H$, where
\[
  H=\left(\!\!\begin{array}{ccccccc|ccccccc}1&0&0&0&0&0&1&1&0&0&0&0&0&0\\0&1&0&0&0&0&3&3&2&3&2&0&1&1\\0&0&1&0&0&0&0&1&2&2&3&3&3&3\\
   0&0&0&1&0&0&2&1&3&1&3&0&2&3\\0&0&0&0&1&0&1&2&3&2&3&1&3&2\\0&0&0&0&0&1&2&0&3&1&0&3&2&0\\0&0&0&0&0&0&1&0&1&3&3&2&0&0\\
   0&0&0&0&0&0&0&1&3&0&2&1&2&3\end{array}\!\!\right).
\]
One checks straightforwardly that $\ds(C)=\ds(C^{\perp})=2$,
and all codewords of symplectic weight~$2$ in~$C^{\perp}$ are actually in~$C$ (namely, the nonzero multiples of the first row of~$G$).
Since $(0,1,1,0,0,0,1,0)H=(0, 1, 1, 0, 0, 0, 0, 0, 1, 0, 0, 1, 0, 0)$ is a codeword in~$C^\perp-C$ of weight~$3$ we have $\dist_{C^\perp}(C)=3$.
Let us now consider the reduced codes $\ov{C},\,\ov{C^\perp}\subseteq\F_2^{14}$.
Generator matrices~$\ov{G}$ and~$\ov{H}$ are obtained by multiplying~$G$ and~$H$ by~$2$ modulo~$4$ and then replacing all entries equal to~$2$ by~$1$.
It turns out that $\ds(\ov{C})=2=\ds(\ov{C^\perp})=2$, and the only codeword of symplectic weight~$2$ is the first row of~$\ov{G}$, thus in~$\ov{C}$.
Since the reduction of the above given codeword of weight~$3$ also has weight~$3$, we obtain
$\dist_{\small{\overline{C}^\perp}}(\ov{C})=3=\dist_{C^\perp}(C)$.
\end{exa}

The next example illustrates the situation where $\dist_{\small{\overline{C}^\perp}}(\ov{C})=\dist_{C^\perp}(C)$, but where the nonzero codewords of lowest weight
in $\ov{C}^\perp-\ov{C}$ are not the reduction of lowest weight codewords in $C^\perp-C$.

\begin{exa}\label{E-FreeSelfOrth}
Let $R=\Z_8$ and $n=3$. Consider $C=\im G,\, C^{\perp}=\im H$, where
\[
   G=\left(\!\!\begin{array}{ccc|ccc}1&0&0&1&2&2\\0&1&4&2&1&2\end{array}\!\!\!\right),\quad
   H=\left(\!\!\begin{array}{ccc|ccc}1&0&0&1&2&2\\0&1&4&2&1&2\\0&0&1&2&2&0\\0&0&0&0&4&7\end{array}\!\!\!\right).
\]
Clearly,~$C$ is self-orthogonal. In this case $\dist_{C^\perp}(C)=1$, and all weight-1-codewords in $C^\perp-C$ have all their entries in the maximal ideal $(2)$.
Hence their reductions are the zero word.
Yet, $\overline{C}^\perp-\ov{C}$ contains~$3$ codewords of weight~$1$, and thus $\dist_{\small{\overline{C}^\perp}}(\ov{C})=\dist_{C^\perp}(C)$.
They are thus the reduction of higher weight codewords in $C^\perp-C$.
\end{exa}

The above examples resemble all other examples that we have studied. They have led to Conjecture~\ref{C-RelDist}.

We now briefly consider non-free stabilizer codes.
In this case one has to carefully distinguish between $\ov{C}^\perp$ and $\ov{C^\perp}$.
In the following example we obtain $\dist_{C^\perp}(C)=\dist_{\small{\overline{C}^\perp}}(\ov{C})<\ds(\ov{C^\perp}-\ov{C})$.

\begin{exa}\label{E-Z8NonFree}
Let $R = \Z_8$ and let~$C$ be the stabilizer code given by the row space of the matrix
\[
	G:=\left(
		\begin{array}{ccccc|ccccc}
         1&0&0&0&3&0&0&2&3&0\\0&1&0&0&3&0&0&7&7&0\\0&0&2&0&0&6&0&0&0&2\\
         0&0&0&2&0&6&6&0&0&0\end{array}\right).
\]
One easily checks that~$C$ is indeed self-orthogonal. Clearly $C$ is not free.
Moreover, $|C| = 8^3\cdot 2$ and thus $|C^{\perp}|=8^6\cdot4$. The dual code is generated by
\[
  H:=\left(\begin{array}{ccccc|ccccc}
       1&0&0&0&3&0&0&2&3&0\\0&1&0&0&3&0&0&7&7&0\\0&0&1&0&0&7&4&0&0&1\\0&0&0&1&0&7&3&0&0&4\\0&0&0&0&1&0&0&1&4&0\\
       0&0&0&0&0&1&1&0&0&5\\0&0&0&0&0&0&0&4&0&0\\0&0&0&0&0&0&0&0&4&0\end{array}\right).
\]
The last row of~$H$ shows that $\dist_{C^\perp}(C)=1$.
As for the reductions we have $\ov{C}=\im G',\ \ov{C^\perp}=\im H'$, where
\[
  G'=\left(
		\begin{array}{ccccc|ccccc}
         1&0&0&0&1&0&0&0&1&0\\0&1&0&0&1&0&0&1&1&0\end{array}\right),\
  H'=\left(
		\begin{array}{ccccc|ccccc}
         1&0&0&0&1&0&0&0&1&0\\0&1&0&0&1&0&0&1&1&0\\0&0&1&0&0&1&0&0&0&1\\0&0&0&1&0&1&1&0&0&0\\0&0&0&0&1&0&0&1&0&0\\0&0&0&0&0&1&1&0&0&1
         \end{array}\right).
\]
It follows that $\ds(\ov{C^\perp}-\ov{C})=2$.
On the other hand, the dual code $\ov{C}^\perp$ certainly contains the vector $(0,0,0,0,0,0,0,1,0,0)$ (because it is symplectically orthogonal to each row of~$G'$), and thus
$\dist_{\small{\overline{C}^\perp}}(\ov{C})=\ds(\ov{C}^\perp-\ov{C})=1=\dist_{C^\perp}(C)$.
\end{exa}

\begin{rem}\label{L-NK}
\begin{alphalist}
\item In \cite{NaKl12} the authors use as the relative minimum distance of the reduction the quantity $\ds{(\ov{C^{\perp} -C})}$. In \cite[Thm.~16]{NaKl12} they show
       \begin{equation}\label{E-DistNaKL}
       \dist_{\small{C^{\perp}}}(C) \leq \ds{(\ov{C^{\perp} -C})}
       \end{equation}
       for free stabilizer codes over chain rings.
       Note that if~$C$ is a free stabilizer code then $\ov{C}^{\perp}-\ov{C} \subseteq \ov{C^{\perp}-C}$ and thus $\ds(\ov{C^{\perp}-C})\leq\dist_{\small{\overline{C}^\perp}}(\ov{C})$.
       This means that the authors of~\cite{NaKl12} compare the relative distance of a free code to a potentially smaller quantity than we do.
      We believe that a comparison of stabilizer codes over a ring versus a field should relate the relative distances of these codes, which in this case means comparing $\dist_{C^\perp}(C)$
      and $\dist_{\small{\overline{C}^\perp}}(\ov{C})$.
\item As Example \ref{E-Z8NonFree} shows, the freeness of the code is necessary for the inequality $ \ds(\ov{C^{\perp}-C})\leq\dist_{\small{\overline{C}^\perp}}(\ov{C})$ to hold.
In fact, for non-free codes this inequality may even turn around as seen in Example~\ref{E-Z8NonFree}.
      This suggests that $\ds(\ov{C^{\perp}-C})$ behaves erratically.
       Furthermore, the same example provides a case for which Inequality~\eqref{E-DistNaKL} is strict (it can be verified that $\ds(\ov{C^{\perp}-C})=2$),
       yet $\dist_{\small{C^{\perp}}}(C) = \dist_{\small{\overline{C}^\perp}}(\ov{C})$.
\end{alphalist}
\end{rem}

\section{Isometries of Stabilizer Codes}\label{SS-Iso}

In Definition~\ref{D-Iso} we introduced symplectic isometries.
As  always in coding theory, weight functions and weight-preserving linear maps   give rise to the question as to what the isometries between two codes are.
Notice that in this particular case isometries also preserve the symplectic inner product.
Therefore one should pay special attention to symplectic isometries between self-orthogonal codes.
In this section we briefly report on some basic observations for symplectic isometries.
They appear to be new also for codes over finite fields. We close with some interesting open problems.

As explained right after Definition~\ref{D-Weight}, the symplectic weight can be regarded as a Hamming weight over the alphabet~$R^2$.
In this section we will make use of this connection.
To this end consider the coordinate change
\[
  \gamma: R^{2n}\longrightarrow (R^2)^n,\quad  (a_1,\ldots,a_n\mid b_1,\ldots,b_n)\longmapsto (a_1,b_1\mid a_2,b_2\mid \ldots\mid a_n,b_n).
\]
For any~$R$-linear $f:R^{2n}\longrightarrow R^{2n}$ we define $\hat{f}:=\gamma\circ f\circ\gamma^{-1}: (R^2)^n\longrightarrow(R^2)^n$.
Thus we have the commutative diagram
\begin{equation}\label{e-Diagram}
\begin{array}{l}
   \begin{xy}
   (0,0)*+{(R^2)^n}="a"; (20,0)*+{(R^2)^n}="b";%
   (0,20)*+{R^{2n}}="c"; (20,20)*+{R^{2n}}="d";%
   {\ar "a";"b"}?*!/_2mm/{\hat{f}};
   {\ar "c";"d"}?*!/_2mm/{f};%
   {\ar "c";"a"}?*!/_2mm/{\gamma};
   {\ar "d";"b"}?*!/_2mm/{\gamma};
   \end{xy}
\end{array}
\end{equation}
Let~$\wtH$ be the Hamming weight of vectors with entries in the alphabet~$R^2$.
Thus for $x=(a_1,b_1\mid  \ldots\mid a_n,b_n)$ we have
$\wtH(x) =|\{i\mid (a_i,b_i)\neq(0,0)\}|=\wts(\gamma^{-1}(x))$.
Moreover, on $(R^2)^n$ we define the inner product $\inner{x}{y}:=\inners{\gamma^{-1}(x)}{\gamma^{-1}(y)}$ for all $x,y\in(R^2)^n$.
Explicitly, this amounts to
\begin{equation}\label{e-J}
  \inner{(x_1,\ldots,x_n)}{(y_1,\ldots,y_n)}=\sum_{i=1}^n x_iJy_i\T,
    \ \text{ where }J=\begin{pmatrix}0&-1\\1&0\end{pmatrix}
\end{equation}
and where $x_i,y_i\in R^2$.
All of this shows
\begin{equation}\label{e-ffhatw}
   f \text{ is $\wts$-preserving }\Longleftrightarrow \hat{f}\text{ is $\wtH$-preserving}
\end{equation}
and
\begin{equation}\label{e-ffhatp}
   f \text{ preserves }\inners{\cdot}{\cdot}\Longleftrightarrow \hat{f}\text{ preserves }\inner{\cdot}{\cdot}.
\end{equation}

This allows us to translate the question of symplectic isometries to linear maps preserving the Hamming weight and the
above inner product.

We first describe the symplectic isometries on the entire ambient space~$R^{2n}$.
The following naturally generalizes the fact that every Hamming isometry $f:R^n\longrightarrow R^n$ is a monomial map, that is, given by a matrix representation of the form $DP$, where~$P$ is an $n\times n$-permutation matrix and $D\in\GL_n(R)$ a diagonal matrix with units on the diagonal.

\begin{theo}\label{T-IsoR2n}
Let $f:R^{2n}\longrightarrow R^{2n}$ be a linear map.
Then~$f$ is a symplectic isometry iff the matrix representation of~$\hat{f}$ in $R^{2n\times 2n}$
is a block matrix consisting of $2\times2$-blocks such that in each block row and block column there is exactly one nonzero block and that block is in
$\SL_2(R):=\{M\in R^{2\times 2}\mid \det(M)=1\}$. In other words, the matrix representation of~$\hat{f}$ is of the form
\[
  \diag(A_1,\ldots,A_n)(P\otimes I_2),\ \text{ where }A_i\in\SL_2(R)\text{ and }P\in\cS_n,
\]
where $\cS_n$ is the group of $n\times n$-permutation matrices and $\otimes$ denotes the Kronecker product.
\end{theo}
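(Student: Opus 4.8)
The plan is to exploit the equivalences \eqref{e-ffhatw} and \eqref{e-ffhatp}, which reduce the problem to classifying linear maps $\hat f:(R^2)^n\longrightarrow(R^2)^n$ that are simultaneously $\wtH$-preserving (Hamming weight over the alphabet $R^2$) and $\inner{\cdot}{\cdot}$-preserving. For the ``if'' direction, one checks directly that a map of the stated form $\diag(A_1,\ldots,A_n)(P\otimes I_2)$ permutes the $n$ alphabet-blocks and acts on each by an element of $\SL_2(R)$; since each $A_i\in\GL_2(R)$ is in particular a bijection of $R^2$ fixing $0$, it preserves $\wtH$ blockwise, and a short computation using \eqref{e-J} — namely $A_iJA_i\T = \det(A_i)J = J$ for $A_i\in\SL_2(R)$ — shows $\inner{\cdot}{\cdot}$ is preserved. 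Then \eqref{e-ffhatw} and \eqref{e-ffhatp} give that $f$ is a symplectic isometry.

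For the ``only if'' direction I would first use the Hamming-isometry structure. The key input is the MacWilliams-type extension theorem for Frobenius rings: every $\wtH$-preserving linear bijection of $(R^2)^n$ (where we regard $R^2$ merely as a free module over the Frobenius ring $R$, i.e.\ as the alphabet) is ``monomial'' in the sense that it permutes the $n$ coordinate blocks and acts on the block in position $i$ by some $R$-linear automorphism $A_i$ of $R^2$; that is, $\hat f$ has matrix $\diag(A_1,\ldots,A_n)(P\otimes I_2)$ with $A_i\in\GL_2(R)$ and $P\in\cS_n$. (Strictly, one should confirm $f$ — hence $\hat f$ — is bijective; this follows since a $\wts$-preserving map is injective, and $\dim_R R^{2n}$ is finite, so it is an automorphism of $R^{2n}$.) Having this normal form, it remains only to pin down the determinants: the condition that $\hat f$ preserve $\inner{\cdot}{\cdot}$ means, block-by-block after absorbing the permutation, that $A_iJA_i\T = J$ for each $i$, and taking determinants gives $\det(A_i)^2=1$, while comparing the matrices entrywise forces $\det(A_i)=1$. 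Hence each $A_i\in\SL_2(R)$, which is the claimed form.

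The main obstacle is justifying the blockwise monomial form of a Hamming isometry over the alphabet $R^2$. One cannot directly invoke the classical finite-field result; the correct reference is the extension theorem for linear codes over finite Frobenius rings (e.g.\ Wood~\cite{Wo99}), applied with the module alphabet $R^2$ rather than $R$. Some care is needed because $f$ is only assumed defined on $R^{2n}$ as a linear map, so one must first argue bijectivity and then that $\wtH$-preservation on all of $(R^2)^n$ (not merely on a subcode) forces the monomial structure — this is exactly the content of the extension theorem together with the observation that a weight-preserving \emph{automorphism} of the whole space must permute the standard ``weight-one'' orbits. Once this structural step is in hand, the determinant computation via \eqref{e-J} is routine, and I expect it to occupy only a few lines.
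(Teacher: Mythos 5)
Your overall strategy is the same as the paper's: pass to $\hat f$ via \eqref{e-ffhatw} and \eqref{e-ffhatp}, show that a $\wtH$-isometry of $(R^2)^n$ is a block-monomial map with blocks in $\GL_2(R)$, and then use \eqref{e-J} and $A J A\T = \det(A)\,J$ to force each block into $\SL_2(R)$. The ``if'' direction is a routine verification in both treatments, and your determinant step, while slightly roundabout (the paper reads $\det(A)=1$ off the entries of $AJA\T$ directly, avoiding the intermediate $\det(A)^2=1$), is fine.

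The one genuine problem is your appeal to the MacWilliams extension theorem with $R^2$ as the module alphabet. That theorem is \emph{false} for the alphabet $R^2$: since $\soc(R^2)$ is not a cyclic $R$-module, the extension property fails (this is exactly what the paper itself points out in the discussion preceding Example~\ref{E-NonExt}, citing \cite[Thm.~5.2]{Wo09}, and Example~\ref{E-NonExt} is a concrete instance of the failure). So the reference you offer as the ``correct'' justification is not available. Fortunately the theorem you are proving does not need any extension theorem, because $\hat f$ is defined on the \emph{entire} space $(R^2)^n$ rather than on a proper subcode. The monomial structure then follows from the elementary observation you mention almost as an afterthought: a linear $\wtH$-isometry of the whole space is a bijection (injectivity follows from weight-preservation plus finiteness) and must send $\wtH$-weight-one vectors to $\wtH$-weight-one vectors; applying this to the standard basis vectors $e_i^{(1)}, e_i^{(2)}$ of block $i$ and to their sum $e_i^{(1)}+e_i^{(2)}$ (which also has weight one) forces each block to be mapped into a single block by an $R$-linear automorphism of $R^2$, and bijectivity makes the induced map on blocks a permutation. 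That direct argument is what the paper invokes (``This follows immediately by using the standard basis vectors''), and it is the step you should rely on, not the extension theorem. With that substitution your argument matches the paper's proof.
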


\begin{proof}
By~\eqref{e-ffhatw} and~\eqref{e-ffhatp} $f$ is a symplectic isometry iff~$\hat{f}$ is $\wtH$-preserving and preserves
$\inner{\cdot}{\cdot}$. Hence we may simply consider~$\hat{f}$.
Before proving the equivalence, we note that for any matrix $A\in  R^{2\times 2}$ we have
\[
  AJA\T=\begin{pmatrix}0&-\det(A)\\\det(A)&0\end{pmatrix}.
\]
Hence
\begin{equation}\label{e-AJA}
  AJA\T=J\Longleftrightarrow A\in\SL_2(R).
\end{equation}
As for the stated equivalence note first that the $\wtH$-preserving maps on~$(R^2)^n$ are exactly the monomial maps, that is, a permutation of the 
$R^2$-entries along with automorphisms applied to each entry.
This follows immediately by using the standard basis vectors.
In other words,~$\hat{f}$ is $\wtH$-preserving iff there exist matrices $A_i\in\GL_2(R)$ and a
permutation matrix~$P_{\sigma}\in\cS_n$ such that
\[
  \hat{f}(x)=x\, \diag(A_1,\ldots,A_n)(P_{\sigma}\otimes I_2)\text{ for all }x\in(R^2)^n.
\]
Here~$P_{\sigma}$ is the permutation matrix  satisfying $(z_1,\ldots,z_n)P_\sigma=(z_{\sigma(1)},\ldots,z_{\sigma(n)})$.
For this map~$\hat{f}$ we compute with the aid of~\eqref{e-J}
\begin{equation}\label{e-fhatinner}
   \inner{\hat{f}(x)}{\hat{f}(y)}=\sum_{i=1}^n x_{\sigma(i)}A_{\sigma(i)}JA_{\sigma(i)}\T y_{\sigma(i)}\T=\sum_{j=1}^n x_jA_jJA_j\T y_j\T,
\end{equation}
where $x=(x_1,\ldots,x_n)$ with $x_i\in R^2$ and similarly for~$y$.
Again with the help of the standard basis vectors we obtain
\[
  \hat{f}\text{ preseves }\inner{\cdot}{\cdot}\Longleftrightarrow A_jJA_j\T=J\text{ for all }j\Longleftrightarrow A_j\in\SL_2(R)\text{ for all }j=1,\ldots,n,
\]
where the second step follows from~\eqref{e-AJA}.
This concludes the proof.
\end{proof}

\begin{exa}\label{E-Tausigmai}
Consider the symplectic isometries introduced in Definition~\ref{D-SEquiv}.
One easily verifies that the isometry~$\tau_\sigma$ transforms to $\widehat{\tau_{\sigma}}$ with matrix representation $P_{\sigma}\otimes I_2$.
The isometry~$\tau_i$ transforms to $\widehat{\tau_i}$ with matrix representation $\diag(I,\ldots,I,J,I,\ldots,I)\in\GL_{2n}(R)$, with $J$ at the $i$-th diagonal position.
\end{exa}

We now turn to the question whether a symplectic isometry $f:C\longrightarrow R^{2n}$ extends to a symplectic isometry
on all of~$R^{2n}$.
One may recall from \cite[Theorem 6.3]{Wo99} that when~$R$ is a Frobenius ring and $M\subseteq R^n$ a submodule,
every Hamming isometry
$M\longrightarrow R^n$ extends to a Hamming isometry $R^n \longrightarrow R^n$, and therefore is a monomial map.
This is usually referred to as MacWilliams Extension Theorem.
However, the extendibility of Hamming isometries is not always possible if the alphabet is not a Frobenius ring or Frobenius bimodule.
For instance, if we replace the alphabet~$R$ by~$R^2$, a Hamming isometry is not a monomial map in general.
This is due to the fact that $\soc(R^2)$ is not a cyclic module; see \cite[Theorem 5.2]{Wo09}.

It turns out that in our situation the extension property fails, that is, a symplectic isometry $f:C\longrightarrow R^{2n}$ may not extend to such a map on the entire ambient space~$R^{2n}$. In other words, $f$ is not of the form as in Theorem~\ref{T-IsoR2n}.

\begin{exa}\label{E-NonExt}
Consider the binary code $C\subseteq\F_2^8$, which is generated by either of the matrices
\[
   G_1=\left(\!\!\begin{array}{cccc|cccc}
               1&0&1&1&0&1&0&0\\0&1&0&1&1&0&0&0\\0&0&0&0&1&0&1&0\\0&0&0&0&1&1&0&1\end{array}\!\!\right),\quad
   G_2=\left(\!\!\begin{array}{cccc|cccc}
               1&1&1&0&1&0&1&1\\0&0&0&0&1&1&0&1\\0&1&0&1&0&1&0&1\\0&0&0&0&0&1&1&1\end{array}\!\!\right).
\]
One easily checks that $C$ is self-dual under the symplectic inner product.
Consider the linear map $f:C\longrightarrow C$ that maps the $i$-th row of~$G_1$ to the $i$-th row of~$G_2$.
One straightforwardly checks that~$f$ preserves the symplectic weight.
It clearly also preserves the symplectic inner product as the code is self-dual.
In order to see that~$f$ does not extend to a symplectic isometry n~$\F_2^8$, we transform the map to
$\hat{f}=\gamma\circ f\circ\gamma^{-1}$; see~\eqref{e-Diagram}.
In other words, we consider the code $\gamma(C)$.
It is generated by either of the matrices $H_i=\gamma(G_i)$, where we apply~$\gamma$ row-wise to the matrices.
Hence
\[
H_1 = \left(\!\!
	\begin{array}{cc|cc|cc|cc}
    1 & 0 & 0 & 1 & 1 & 0 & 1 & 0 \\
    0 & 1 & 1 & 0 & 0 & 0 & 1 & 0 \\
    0 & 1 & 0 & 0 & 0 & 1 & 0 & 0 \\
    0 & 1 & 0 & 1 & 0 & 0 & 0 & 1
	\end{array}\!\!\right), \quad
H_2 = \left(\!\!\begin{array}{cc|cc|cc|cc}
    1 & 1 & 1 & 0 & 1 & 1 & 0 & 1 \\
    0 & 1 & 0 & 1 & 0 & 0 & 0 & 1 \\
    0 & 0 & 1 & 1 & 0 & 0 & 1 & 1 \\
    0 & 0 & 0 & 1 & 0 & 1 & 0 & 1
    \end{array}\!\!\right).
\]
Now $\hat{f}:\gamma(C)\longrightarrow \gamma(C)$ is the linear map that sends the $i$-th row of~$H_1$ to the $i$-th row of~$H_2$.
Suppose~$f$ extends to a symplectic isometry on~$\F_2^8$.
Then Theorem~\ref{T-IsoR2n} tells us that there exist matrices $A_1,\ldots,A_4\in\SL_2(\F_2)$ and a permutation
matrix $P_{\sigma}\in\cS_4$ such that
\[
    H_1\diag(A_1,A_2,A_3,A_4)(P_{\sigma}\otimes I_2)=H_2.
\]
Writing $H_1=(X_1\mid X_2\mid X_3\mid X_4)$, where $X_i$ denotes the $i$-th $4\times 2$-block of~$H_1$, and similarly
$H_2=(Y_1\mid Y_2\mid Y_3\mid Y_4)$, the above reads as
\[
  (X_{\sigma(1)}A_{\sigma(1)}\mid X_{\sigma(2)}A_{\sigma(2)}\mid X_{\sigma(3)}A_{\sigma(3)}\mid
      X_{\sigma(4)}A_{\sigma(4)})=(Y_1\mid Y_2\mid Y_3\mid Y_4).
\]
But this is impossible because of, for instance, the $2\times 2$-zero block in~$Y_1$.
Note that since $\SL_2(\F_2)=\GL_2(\F_2)$ this argument shows that~$f$ does not even extend to a
$\wts$-preserving map.
\end{exa}

Whenever the extension property fails one may ask about further information on the isometry group of the code under
consideration.
Let us make this more precise.

For obvious reasons, we call symplectic isometries of~$R^{2n}$  described in Theorem~\ref{T-IsoR2n} $\SL_2(R)$-\emph{monomial maps}.
For a submodule $C\subseteq R^{2n}$ define the groups
\begin{align*}
   \MonSL(C)&:=\{f:C\longrightarrow C\mid f\text{ is an $\SL_2(R)$-monomial map} \},\\
   \Symp(C)&:=\{f:C\longrightarrow C\mid f\text{ is a symplectic isometry}\}.
\end{align*}
Thus $\MonSL(R^{2n})=\Symp(R^{2n})$, but in general $\MonSL(C)\subsetneq\Symp(C)$.
This gives rise to the following open problem.

\begin{question}\label{Q1}
For a self-orthogonal submodule $C\subseteq R^{2n}$, how different can $\MonSL(C)$ and $\Symp(C)$ be?
\end{question}

In \cite{Wo16} the author considers an analogue question for additive codes over finite fields, and shows that the difference can be as big as possible. Namely, there exist codes with trivial group of monomial maps and maximal group of isometries.
In our situation, it is difficult to bring into play the self-orthogonality and therefore Question \ref{Q1} is wide open. The case for binary stabilizer codes is under consideration in \cite{Pl17}.

We also wish to remark that one can find examples of symplectic isometries $f:C_1\rightarrow C_2$ between self-orthogonal codes $C_1,C_2$ that do not extend to a symplectic isometry
$\tilde{f}:\tilde{C}_1\rightarrow\tilde{C}_2$, for \emph{any}  self-dual codes~$\tilde{C}_i$ containing~$C_i$.
This leads us to the following open problems, with which we close this paper.

\begin{question}\label{Q2}
Let $C\subseteq R^{2n}$ be a self-orthogonal submodule. How different can the group $\Symp(C)$  be from the group
$\Symp(C^\perp,C):=\{f\in\Symp(C^\perp)\mid f(C)=C\}$?
\end{question}

\begin{question}\label{Q3}
Let $C\subseteq R^{2n}$ be a self-orthogonal submodule. How different can the group $\MonSL(C^{\perp},C) := \{f \in \MonSL(C^{\perp}) \mid f(C) = C\}$  be from the group
$\Symp(C^\perp,C):=\{f\in\Symp(C^\perp)\mid f(C)=C\}$?
\end{question}

Note that for self-dual codes, Questions~\ref{Q3} and~\ref{Q1} are identical.

\bibliographystyle{abbrv}
\bibliography{literatureAK,literatureLZ}

\begin{thebibliography}{10}

\bibitem{AsKn01}
A.~Ashikhmin and E.~Knill.
\newblock Nonbinary quantum stabilizer codes.
\newblock {\em IEEE Trans. Inform. Theory}, 47:3065--3072, 2001.

\bibitem{CRSS97}
A.~R. Calderbank, E.~M. Rains, P.~W. Shor, and N.~J.~A. Sloane.
\newblock Quantum error correction and orthogonal geometry.
\newblock {\em Phys. Rev. Lett.}, 78:405--409, 1997.

\bibitem{CRSS98}
A.~R. Calderbank, E.~M. Rains, P.~W. Shor, and N.~J.~A. Sloane.
\newblock Quantum error correction via code over {GF}(4).
\newblock {\em IEEE Trans. Inform. Theory}, IT-44:1369--1387, 1998.

\bibitem{ClGo92}
H.~L. Claasen and R.~W. Goldbach.
\newblock A field-like property of finite rings.
\newblock {\em Indag.\ Math.}, 3:11--26, 1992.

\bibitem{Go96}
D.~Gottesman.
\newblock A class of quantum-error correcting codes saturating the quantum
  {H}amming bound.
\newblock {\em Phys. Rev. A}, 54:1862--1868, 1996.

\bibitem{Hi97}
Y.~Hirano.
\newblock On admissible rings.
\newblock {\em Indag.\ Math.}, 8:55--59, 1997.

\bibitem{Hon01}
T.~Honold.
\newblock Characterization of finite {F}robenius rings.
\newblock {\em Arch. Math.}, 76:406--415, 2001.

\bibitem{HoLa01}
T.~Honold and I.~Landjev.
\newblock Mac{W}illiams identities for linear codes over finite {F}robenius
  rings.
\newblock In D.~Jungnickel and H.~Niederreiter, editors, {\em Proceedings of
  The Fifth International Conference on Finite Fields and Applications Fq5
  (Augsburg, 1999)}, pages 276--292. Springer, Berlin Heidelberg New York,
  2001.

\bibitem{KKKS06}
A.~Ketkar, A.~Klappenecker, S.~Kumar, and P.~Sarvepalli.
\newblock Nonbinary stabilizer codes over finite fields.
\newblock {\em IEEE Trans. Inform. Theory}, IT-52:4892 -- 4914, 2006.

\bibitem{Kn96a}
E.~Knill.
\newblock Group representations, error bases and quantum codes.
\newblock Los Alamos National Laboratory Report LAUR-96-2807; arXiv:
  quant-ph/9608049, August 1996.

\bibitem{Kn96}
E.~Knill.
\newblock Non-binary unitary error bases and quantum codes.
\newblock Los Alamos National Laboratory Report LAUR-96-2717; arXiv:
  quant-ph/9608048, 1996.

\bibitem{KnLa97}
E.~Knill and R.~Laflamme.
\newblock A theory of quantum error-correcting codes.
\newblock {\em Phys. Rev.~A}, 55:900--911, 1997.
\newblock See also quant-ph/9604034.

\bibitem{Lam99}
T.~Y. Lam.
\newblock {\em Lectures on Modules and Rings}.
\newblock Graduate Text in Mathematics, Vol.~189. Springer, 1999.

\bibitem{Lamp53}
E.~Lamprecht.
\newblock {\"U}ber {I}-regul{\"a}re {R}inge, regul{\"a}re {I}deale und
  {E}rkl{\"a}rungsmoduln.
\newblock {\em I. Math. Nachr.}, 10:353--382, 1953.

\bibitem{NaKl12}
S.~Nadella and A.~Klappenecker.
\newblock Stabilizer codes over {F}robenius rings.
\newblock In {\em 2012 IEEE International Symposium on Information Theory
  (ISIT)}, pages 165 -- 169, 2012.

\bibitem{NoSa00}
G.~Norton and A.~Salagean.
\newblock On the {H}amming distance of linear codes over a finite chain ring.
\newblock {\em IEEE Trans. Inform. Theory}, 46(3):1060--1067, 2000.

\bibitem{Pl17}
T.~Pllaha.
\newblock Isometry groups of binary stabilizer codes.
\newblock In Preparation.

\bibitem{Wo16}
J.~A. Wood.
\newblock Isometry groups of additive codes over finite fields.
\newblock Preprint 2016. To Appear in \textit{J. Algebra Appl.}

\bibitem{Wo99}
J.~A. Wood.
\newblock Duality for modules over finite rings and applications to coding
  theory.
\newblock {\em Americ.\ J.\ of Math.}, 121:555--575, 1999.

\bibitem{Wo09}
J.~A. Wood.
\newblock Foundations of linear codes defined over finite modules: {T}he
  extension theorem and the {M}ac{W}illiams identities.
\newblock In P.~Sol{\`e}, editor, {\em Codes over Rings, Proceedings of the
  CIMPA Summer School, Ankara (2008)}, volume~6 of {\em Series on Coding Theory
  and Crytology}, pages 124--190. Singapore: World Scientific, 2009.

\end{thebibliography}

\end{document}